\documentclass[10pt]{article}
\usepackage{latexsym,amsthm,amsmath,amssymb,url,textcomp}
\usepackage{amsfonts}
\usepackage{graphicx}
\usepackage{subfigure}
\usepackage[UKenglish]{babel}
\usepackage{algorithm2e}
\usepackage{tikz,comment}
\usetikzlibrary{matrix,arrows,decorations.pathmorphing}

\newtheorem{theorem}{Theorem}
\newtheorem{definition}{Definition}

\newtheorem{corollary}{Corollary}
\newtheorem{lemma}{Lemma}

\newtheorem{rrule}{Reduction Rule}

\newcommand{\smc}{{\sc Signed Max Cut ATLB}}
\newcommand{\mc}{{\sc Max Cut ATLB}}

\newcommand{\interior}[1]{{#1}_{\textsf{int}}}
\newcommand{\exterior}[1]{{#1}_{\textsf{ext}}}
\newcommand{\pt}[1]{\textsf{pt}({#1})}

\title{Maximum Balanced Subgraph Problem Parameterized Above Lower Bound
\footnote{Extended abstract of the paper will appear in the proceedings of COCOON 2013}}
\author
{R. Crowston, G. Gutin, M. Jones and  G. Muciaccia\\Department of Computer
Science\\ Royal Holloway University of London\\ Egham, Surrey TW20
0EX, UK}

\begin{document}

\maketitle

\begin{abstract}
We consider graphs without loops or parallel edges in which every edge
is assigned $+$ or  $-$.
Such a signed graph is balanced if its vertex set can be partitioned
into parts $V_1$ and $V_2$ such that
all edges between vertices in the same part have sign $+$ and
all edges between vertices of different parts have sign $-$ (one
of the parts may be empty). It is well-known that every
connected signed graph with $n$ vertices and $m$ edges has a balanced
subgraph with at least $\frac{m}{2} + \frac{n-1}{4}$ edges and this bound is tight. We
consider the following parameterized problem: given a connected signed
graph $G$  with $n$ vertices and $m$ edges, decide whether $G$ has a
balanced subgraph with at least $\frac{m}{2} + \frac{n-1}{4}+\frac{k}{4}$ edges, where $k$ is
the parameter.

We obtain an algorithm for the problem of runtime $8^k(kn)^{O(1)}$. We
also prove that for each instance $(G,k)$ of the problem, in
polynomial time, we can either solve $(G,k)$ or produce an equivalent
instance $(G',k')$ such that $k'\le k$ and $|V(G')|=O(k^3)$. Our first
result generalizes a result of Crowston, Jones and Mnich (ICALP
2012)  on the corresponding parameterization of Max Cut (when every
edge of $G$ has sign $-$). Our second result generalizes and
significantly improves the corresponding result of Crowston, Jones and
Mnich for MaxCut: they showed that $|V(G')|=O(k^5)$.
\end{abstract}

\section{Introduction}


We consider undirected graphs with no parallel edges or loops and in which every edge is labelled by $+$ or $-$. We call such graphs {\em signed graphs},  and edges, labelled by $+$ and $-$, {\em positive} and {\em negative} edges, respectively. The labels $+$ and $-$ are the {\em signs} of the corresponding edges.
Signed graphs are well-studied due to their various applications and interesting theoretical properties, see, e.g., \cite{CKSXZ,DasEncSonZha,GGMZ,GZ,Harary,HufBetNie,Zas}. 

Let $G=(V,E)$ be a signed graph and let $V=V_1\cup V_2$ be a partition of the vertex set of $G$ (i.e., $V_1\cap V_2=\emptyset$). We say that $G$ is {\em $(V_1,V_2)$-balanced} if an edge with both endpoints in $V_1$, or both endpoints in $V_2$ is positive, and an edge with one endpoint in $V_1$ and one endpoint in $V_2$ is negative; $G$ is {\em balanced} if it is $(V_1,V_2)$-balanced for some partition $V_1,V_2$ of $V$ ($V_1$ or $V_2$ may be empty). 

In some applications, we are interested in finding a maximum-size balanced subgraph of a signed graph \cite{CKSXZ,DasEncSonZha,HufBetNie,Zas}. We will call this problem {\sc Signed Max Cut}. This problem is a generalization of {\sc Max Cut} and as such is NP-hard ({\sc Signed Max Cut} is equivalent to {\sc Max Cut} when all edges of $G$ are negative).   H{\" u}ffner {\em et al.} \cite{HufBetNie} parameterized {\sc Signed Max Cut} below a tight upper bound: decide whether $G=(V,E)$ contains a balanced subgraph with at least $|E|-k$ edges, where $k$ is the parameter\footnote{We use standard terminology on parameterized algorithmics, see, e.g., \cite{DowneyFellows99,FlumGrohe06,Niedermeier06}.}. H{\" u}ffner {\em et al.} \cite{HufBetNie} showed that this parameterized problem is fixed-parameter tractable (FPT) using a simple reduction to the {\sc Edge Bipartization Problem}: decide whether an unsigned graph can be made bipartite by deleting at most $k$ edges ($k$ is the parameter). Using this result and a number of heuristic reductions, H{\" u}ffner {\em et al.} \cite{HufBetNie} designed a nontrivial practical algorithm that allowed them to exactly solve  several instances of {\sc Signed Max Cut}  that were previously solved only approximately by DasGupta {\em et al.} \cite{DasEncSonZha}.

In this paper, we consider a different parameterization of {\sc Signed Max Cut}: decide whether a connected signed graph $G$ with $n$ vertices and $m$ edges contains a subgraph with at least $\frac{m}{2}+\frac{n-1}{4}+\frac{k}{4}$ edges\footnote{We use $\frac{k}{4}$ instead of just $k$ to ensure that $k$ is integral.}, where $k$ is the parameter. Note that $\pt{G}=\frac{m}{2}+\frac{n-1}{4}$ is a tight lower bound on the number of edges in a balanced subgraph of $G$ (this fact was first proved by Poljak and Turz\'{i}k \cite{PoljakTurzik86}, for a different proof, see \cite{CroFelGutJonRosThoYeo}). Thus, we will call this parameterized problem {\sc Signed Max Cut Above Tight Lower Bound} or {\sc Signed Max Cut ATLB}. Whilst the parameterization of H{\" u}ffner {\em et al.} of \mc\ is of interest when the maximum number of edges in a balanced subgraph $H$ of $G$ is close to the number of edges of $G$, {\sc Signed Max Cut ATLB} is of interest when the maximum number of edges in $H$ is close to the minimum possible value in a signed graph on $n$ vertices and $m$ edges. Thus, the two parameterizations treat the opposite parts of the {\sc Signed Max Cut} ``spectrum.''

It appears that it is much harder to prove that {\sc Signed Max Cut ATLB} is FPT than to show that the parameterization of H{\" u}ffner {\em et al.} of {\sc Signed Max Cut} is. Indeed, {\sc Signed Max Cut ATLB} is a generalization of the same parameterization of {\sc Max Cut} (denoted by {\sc Max Cut ATLB}) and the parameterized complexity of the latter was an open problem for many years (and was stated as an open problem in several papers) until Crowston {\em et al.} \cite{CroJonesMnich} developed a new approach for dealing with such parameterized problems\footnote{Recall that {\sc Max Cut} is a special case of {\sc Signed Max Cut} when all edges are negative.}. This approach was applied by Crowston {\em et al.} \cite{CroGutJon} to solve an open problem of Raman and Saurabh \cite{RamSau} on maximum-size acyclic subgraph of an oriented graph. Independently, this problem was also solved by Mnich {\em et al.} \cite{MPSS} who obtained the solution as a consequence of a meta-theorem which shows that several graph problems parameterized above a lower bound of Poljak and Turz\'{i}k \cite{PoljakTurzik86} are FPT under certain conditions. 

While the meta-theorem is for both unlabeled and labeled graphs, all
consequences of the meta-theorem in \cite{MPSS} are proved only for
parameterized problems restricted to unlabelled graphs.
A possible reason is that one of the conditions of the meta-theorem requires us
to show that the problem under consideration is FPT on a special
family of graphs, called almost forests of cliques\footnote{Forests of
cliques are defined in the next section. An almost forest of cliques
is obtained from a forest of cliques by adding to it a small graph
together with some edges linking the small graph with the forest of
cliques.}. The meta-theorem is useful when it is relatively easy to find
an FPT algorithm on almost forests of cliques.
However, for {\sc Signed Max Cut ATLB}
it is not immediately clear what an FPT algorithm would be
even on a clique.

Our attempts to check that {\sc Signed Max Cut ATLB} is FPT
on almost forests of cliques led us to reduction rules that are
applicable not only to almost forests of cliques, but
to arbitrary instances of {\sc Signed Max Cut ATLB}. Thus, we
found two alternatives to prove that {\sc Signed Max Cut ATLB} is FPT:
with and without the meta-theorem. Since the first alternative required
stating the meta-theorem and all related notions and led us to a slightly
slower algorithm than the second alternative,
we decided to use the second alternative.

We reduce an arbitrary instance of {\sc Signed Max Cut ATLB}
to an instance which is an almost forest of cliques, but with an
important additional property which allows us to make use
of a slight modification of a dynamic programming algorithm of
Crowston {\em et al.} \cite{CroJonesMnich} for {\sc Max Cut ATLB} on
almost forests of cliques.

Apart from showing that {\sc Max Cut ATLB} is FPT, Crowston {\em et al.} \cite{CroJonesMnich} proved that the
problem admits a kernel with $O(k^5)$ vertices. They also found
a kernel with $O(k^3)$ vertices for a variation of {\sc Max Cut ATLB},
where the lower bound used is weaker than the Poljak-Turz\'{i}k bound.
They conjectured that a kernel with $O(k^3)$ vertices exists for {\sc
Max Cut ATLB}
as well. In the main result of this paper, we show that {\sc Signed Max Cut ATLB}, which is a more general problem, also admits a polynomial-size
kernel and, moreover, our kernel has $O(k^3)$ vertices. Despite considering a more general
problem than in \cite{CroJonesMnich}, we found a proof which is
shorter and  simpler than the one in \cite{CroJonesMnich}; in
particular, we do not use the probabilistic method. An $O(k^3)$-vertex kernel for {\sc Signed Max Cut ATLB} does not immediately imply  an $O(k^3)$-vertex kernel for {\sc Max Cut ATLB},
but the same argument as for {\sc Signed Max Cut ATLB} shows that {\sc Max Cut ATLB} admits an $O(k^3)$-vertex kernel. This confirms the conjecture above.

\section{Terminology, Notation and Preliminaries}

For a positive integer $l$, $[l]=\{1,\ldots,l\}$.
A cycle $C$ in $G$ is called {\em positive} ({\em negative}) if the number of negative edges in $C$  is even (odd)\footnote{To obtain the sign of $C$ simply compute the product of the signs of its edges.}. The following characterization of balanced graphs is well-known.

\begin{theorem}\cite{Harary}\label{Htheorem}
A signed graph $G$ is balanced if and only if every cycle in $G$ is positive.
\end{theorem}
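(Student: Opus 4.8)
The plan is to prove both implications directly, after reformulating balance as a two-colouring condition. Encode a partition $(V_1,V_2)$ by a map $\sigma\colon V\to\{0,1\}$ with $\sigma^{-1}(0)=V_1$ and $\sigma^{-1}(1)=V_2$. Under this encoding, $G$ is $(V_1,V_2)$-balanced precisely when every edge $uv$ satisfies the relation that $uv$ is positive iff $\sigma(u)=\sigma(v)$ and negative iff $\sigma(u)\neq\sigma(v)$. The entire argument then reduces to comparing, along walks, the parity of the number of negative edges against the parity of the number of colour changes.

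For the forward direction (balanced implies every cycle positive), I would fix a witnessing $\sigma$ and take an arbitrary cycle $C=v_0v_1\cdots v_{\ell-1}v_0$. Traversing $C$, each negative edge corresponds to a change of colour and each positive edge to no change, so the number of negative edges on $C$ has the same parity as the number of indices $i$ (taken modulo $\ell$) with $\sigma(v_i)\neq\sigma(v_{i+1})$. Since the walk starts and ends at $v_0$, the colour must change an even number of times; hence the number of negative edges on $C$ is even and $C$ is positive.

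For the converse, assume every cycle of $G$ is positive. I may assume $G$ is connected, handling components separately and combining the partitions afterwards (there are no edges between components). Fix a spanning tree $T$ rooted at a vertex $r$, and define $\sigma$ by $\sigma(r)=0$ and, for each other vertex $v$, setting $\sigma(v)$ equal to the parity of the number of negative edges on the unique $r$--$v$ path in $T$. By construction every tree edge satisfies the required relation between its sign and the colours of its endpoints. For a non-tree edge $e=uv$, consider its fundamental cycle $C_e$, consisting of $e$ together with the $u$--$v$ path in $T$. By hypothesis $C_e$ is positive, so the total number of negative edges on $C_e$ is even; as $e$ contributes $0$ or $1$ to this count, the $u$--$v$ tree path carries an odd number of negative edges exactly when $e$ is negative. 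Since $\sigma(u)\neq\sigma(v)$ holds precisely when that tree path carries an odd number of negative edges, we conclude that $e$ is negative iff $\sigma(u)\neq\sigma(v)$, which is the required relation. Hence $\sigma$ witnesses that $G$ is balanced.

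The main obstacle is showing that the tree-based colouring $\sigma$ is consistent on the non-tree edges, and this is exactly where the hypothesis is used: the crucial point is that it suffices to verify positivity on the fundamental cycles of $T$, since these are among all the cycles of $G$, all of which are assumed positive. A secondary point to state carefully is the parity bookkeeping relating colour changes to negative edges along a path, but this becomes routine once the $\{0,1\}$-encoding of the partition is fixed.
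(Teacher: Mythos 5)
The paper does not prove this statement at all---it is quoted from Harary's 1953 paper with a citation---so there is no internal proof to compare against. Your argument is correct and is the standard proof of Harary's balance theorem: the forward direction by parity of colour changes around a cycle, and the converse by propagating a $\{0,1\}$-labelling along a spanning tree and using positivity of the fundamental cycles to verify the non-tree edges. The only step you leave implicit is why $\sigma(u)\neq\sigma(v)$ exactly when the $u$--$v$ tree path carries an odd number of negative edges; this follows because that path is the symmetric difference of the $r$--$u$ and $r$--$v$ tree paths, so the negative edges on their common prefix cancel in the parity and the parity of negative edges on the $u$--$v$ path is $\sigma(u)+\sigma(v)\bmod 2$. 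With that one line added, the proof is complete.
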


Let $G=(V,E)$ be a signed graph. For a subset $W$ of $V$, the {\em $W$-switch} of $G$ is the signed graph $G_W$ obtained from $G$ by
changing the signs of the edges between $W$ and $V\setminus W$.
Note that a signed graph $G$ is balanced if and only if there exists a subset $W$ of $V$ ($W$ may coincide with $V$) such that $G_W$ has no negative edges. Indeed, if $G_W$ has no negative edges, $G$ is $(W,V\setminus W)$-balanced. If $G$ is $(V_1,V_2)$-balanced, then $G_{V_1}$ has no negative edges. 

Deciding whether a signed graph is balanced is polynomial-time solvable.

\begin{theorem}\label{thm:bal}\cite{GGMZ}
Let $G=(V,E)$ be a signed graph. Deciding whether $G$ is balanced is polynomial-time solvable. Moreover, if $G$ is balanced then, in polynomial time, we can find a subset $W$ of $V$ such that $G_W$ has no negative edges.
\end{theorem}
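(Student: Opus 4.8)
The plan is to reformulate balance as the solvability of a system of parity constraints and then to solve that system by propagation along a spanning tree. Recalling the switching characterization noted just above, $G$ is balanced precisely when there is a subset $W\subseteq V$ with $G_W$ free of negative edges. I would encode membership in $W$ by a variable $x_v\in\{0,1\}$ for each $v\in V$. The sign of an edge $uv$ is flipped by the $W$-switch exactly when $x_u\neq x_v$, so $G_W$ has no negative edge if and only if $x_u=x_v$ for every positive edge $uv$ and $x_u\neq x_v$ for every negative edge $uv$. Writing $b_{uv}=0$ for positive edges and $b_{uv}=1$ for negative ones, this is the linear system $x_u\oplus x_v=b_{uv}$ over $\mathrm{GF}(2)$. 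Thus $G$ is balanced if and only if this system is solvable, and any solution yields a suitable $W=\{v:x_v=1\}$.

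To solve the system in polynomial time I would work component by component ($G$ is balanced if and only if each connected component is, and the switch sets combine). In each component, fix a spanning tree $T$, root it, set the root variable to $0$, and propagate: each tree edge $uv$ with $u$ the parent forces $x_v=x_u\oplus b_{uv}$, so a single pass over $T$ determines a candidate assignment $x$ in $O(n)$ time. It then remains only to test, for every non-tree edge $uv$, whether $x_u\oplus x_v=b_{uv}$ holds. If all these tests pass, $G$ is balanced and we output $W$; if any fails, I claim the system is unsolvable and $G$ is not balanced. The whole procedure costs $O(n+m)$ after computing the spanning forest, and it constructs $W$ explicitly.

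The correctness of the non-tree-edge test is where Theorem \ref{Htheorem} enters, and justifying it is the main point. Summing the constraints $x_{v_{i-1}}\oplus x_{v_i}=b_{v_{i-1}v_i}$ around any cycle, the left-hand side telescopes to $0$ (each variable occurs twice), so a consistent assignment can exist only if the number of negative edges on the cycle is even, that is, every cycle is positive, which by Theorem \ref{Htheorem} is equivalent to balance. Conversely, the candidate $x$ satisfies by construction every tree-edge constraint, and each non-tree edge $uv$ together with the tree path between $u$ and $v$ forms its fundamental cycle; satisfying the non-tree-edge constraint is exactly the assertion that this fundamental cycle is positive. The key structural fact I would invoke is that the fundamental cycles generate the whole cycle space over $\mathrm{GF}(2)$ and that the sign of a cycle (the parity of its negative edges) is additive over this space, so all cycles are positive once all fundamental cycles are. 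Hence a failed test certifies a negative fundamental cycle and, by Theorem \ref{Htheorem}, non-balance, while passing all tests certifies balance and validates the output $W$. The only real obstacle is this linear-algebraic bridge from fundamental cycles to arbitrary cycles; once it is in place, the polynomial (indeed linear) running time and the explicit recovery of $W$ are immediate.
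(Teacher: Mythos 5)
The paper does not prove this statement at all---it is quoted from \cite{GGMZ} as a known result---so there is no internal proof to compare against; your argument must stand on its own, and it does. The reduction to the $\mathrm{GF}(2)$ system $x_u\oplus x_v=b_{uv}$ is exactly right (it is the switching reformulation the paper records just before the theorem), and the spanning-tree propagation with non-tree-edge verification is the standard linear-time algorithm. One remark: the step you single out as ``the only real obstacle''---that fundamental cycles generate the cycle space and that sign is additive over it---is not actually needed. If every non-tree edge passes the test, then the candidate assignment $x$ satisfies \emph{every} edge constraint outright (tree edges by construction, non-tree edges by the test), so $W=\{v:x_v=1\}$ directly witnesses that $G_W$ has no negative edges and hence that $G$ is $(W,V\setminus W)$-balanced; no appeal to Theorem \ref{Htheorem} or to cycle-space generation is required for the positive direction. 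For the negative direction, a failed test exhibits a negative fundamental cycle, and your telescoping observation already shows that no assignment can satisfy the constraints around a negative cycle, so the system is unsolvable and $G$ is unbalanced---again without invoking the generation fact. (Implicitly you also use that fixing the root to $0$ loses no generality; this is justified because the constraint system is invariant under complementing all variables of a component.) So the proof is correct, and if anything slightly over-engineered at the end.
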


For a signed graph $G$, $\beta(G)$ will denote the maximum number of edges in a balanced subgraph of $G$. Furthermore, for a signed graph $G=(V,E)$, $\pt{G}$ denotes the Poljak-Turz\'ik bound: $\beta(G)\ge\pt{G}.$ If $G$ is connected, then $\pt{G}=\frac{|E(G)|}{2}+\frac{|V(G)|-1}{4}$, and if $G$ has $t$ components, then $\pt{G}=\frac{|E(G)|}{2}+\frac{|V(G)|-t}{4}$. It is possible to find, in polynomial time, a balanced subgraph of $G$ of size at least $\pt{G}$ \cite{PoljakTurzik86}.

The following easy property will be very useful in later proofs. It follows from Theorem \ref{Htheorem} by observing that for a signed graph the Poljak-Turz\'ik bound does not depend on the signs of the edges and that, for any cycle in $G$, the sign of the cycle in $G$ and in $G_W$ is the same.

\begin{corollary}\label{cor:switch}
Let $G=(V,E)$ be a signed graph and let $W\subset V$. Then $\pt{G_W}=\pt{G}$ and $\beta(G_W)=\beta(G)$.
\end{corollary}

For a vertex set $X$ in a graph $G$, $G[X]$ denotes the subgraph of $G$ induced by $X$. For disjoint vertex sets $X,Y$ of graph $G$, $E(X,Y)$ denotes the
set of edges between $X$ and $Y$. A {\em bridge} in a graph is an edge that, if deleted, increases the number of connected components of the graph. A {\em block} of a graph is either a maximal $2$-connected subgraph or a connected component containing only one vertex.

For an edge set $F$ of a signed graph $G$, $F^+$ and $F^-$ denote the set of positive and negative edges of $F$, respectively.
For a signed graph $G=(V,E)$, the {\em dual} of $G$ is the signed graph $\bar{G}=(V,\bar{E})$, where $\bar{E}^+=E^-$ and $\bar{E}^-=E^+$.
A cycle in $G$ is {\em dually positive} ({\em dually negative}) if the same cycle in $\bar{G}$ is positive (negative).

For a graph $G=(V,E)$, the {\em neighborhood} $N_G(W)$ of $W\subseteq V$ is defined as $\{v\in V:vw\in E,w\in W\}\setminus W$; the vertices in
$N_G(W)$ are called {\em neighbors of $W$}. 
If $G$ is a signed graph, the {\em positive} neighbors of $W\subseteq V$ are the neighbors of $W$ in $G^+=(V,E^+)$;
the set of positive neighbors is denoted $N_G^+(W)$. Similarly, for the {\em negative} neighbors and $N_G^-(W)$.

The next theorem is the `dual' of Theorem \ref{Htheorem}, in the sense that it is its equivalent formulation on the dual of a graph.

\begin{theorem} \label{nobad}
Let $G=(V,E)$ be a signed graph. Then the dual graph $\bar{G}$ is balanced if and only if $G$ does not contain a dually negative cycle.
\end{theorem}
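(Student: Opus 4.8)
The plan is to obtain the statement directly from Harary's characterization (Theorem \ref{Htheorem}), but applied to the signed graph $\bar{G}$ rather than to $G$ itself, and then to translate the conclusion back into the language of $G$ using the definition of a dually negative cycle. Concretely, Theorem \ref{Htheorem} asserts that an arbitrary signed graph is balanced if and only if each of its cycles is positive. Since $\bar{G}$ is itself a signed graph, I would simply instantiate this with $\bar{G}$ in place of $G$: thus $\bar{G}$ is balanced if and only if every cycle of $\bar{G}$ is positive, equivalently, if and only if $\bar{G}$ contains no negative cycle.

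It then remains only to recognise that ``no negative cycle in $\bar{G}$'' is exactly ``no dually negative cycle in $G$.'' This is immediate from the definition: a cycle $C$ (a fixed edge set, common to $G$ and $\bar{G}$) is called dually negative precisely when the same cycle $C$, viewed in $\bar{G}$, is negative. Hence the negative cycles of $\bar{G}$ are in one-to-one correspondence with the dually negative cycles of $G$, so $\bar{G}$ has no negative cycle if and only if $G$ has no dually negative cycle. Combining this with the equivalence of the previous paragraph yields the theorem.

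There is essentially no obstacle here beyond bookkeeping: the entire content of the statement is carried by Theorem \ref{Htheorem}, and the only thing to verify is that the notion ``dually negative'' has been set up so that it matches ``negative in $\bar{G}$,'' which it has by construction. For completeness one could check the sign relationship explicitly---if a cycle has $\ell$ edges of which $p$ are negative in $G$, then it has $\ell-p$ negative edges in $\bar{G}$, so its sign in $\bar{G}$ is governed by the parity of $\ell-p$---but this computation is not actually required, since ``dually negative'' is defined through $\bar{G}$ directly rather than through a parity condition imposed within $G$. I would therefore keep the proof to the two-line argument above and omit the parity remark.
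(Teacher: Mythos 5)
Your argument is correct and matches the paper exactly: the paper gives no explicit proof, describing the theorem only as ``the equivalent formulation of Theorem \ref{Htheorem} on the dual of the graph,'' which is precisely your instantiation of Harary's characterization at $\bar{G}$ combined with the definitional identification of negative cycles of $\bar{G}$ with dually negative cycles of $G$. Nothing is missing.
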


In the next sections, the notion of {\em forest of cliques} introduced in
\cite{CroJonesMnich} plays a key role. 
A connected graph is a {\em tree of cliques} if the vertices of every cycle induce a clique. A {\em forest of cliques}
is a graph whose components are trees of cliques. It follows from the definition that in a forest of cliques any block is a clique.

Note that a forest of cliques is a {\em chordal} graph, i.e., a graph in which every cycle has a chord, that is an edge between two vertices
which are not adjacent in the cycle.
The next lemma is a characterization of chordal graphs which have a balanced dual.
A {\em triangle} is a cycle with three edges.

\begin{corollary}\label{notriangle}
Let $G=(V,E)$ be a signed chordal graph. Then $\bar{G}$ is balanced if and only if $G$ does not contain a positive triangle.
\end{corollary}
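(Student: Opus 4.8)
The plan is to combine Theorem~\ref{nobad} with the fact that, in a chordal graph, the existence of a ``bad'' cycle can always be witnessed by a triangle. By Theorem~\ref{nobad}, $\bar{G}$ is balanced if and only if $G$ contains no dually negative cycle, so it suffices to show that, when $G$ is chordal, $G$ contains a dually negative cycle if and only if it contains a positive triangle.

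First I would unravel the definitions to see that a triangle is dually negative precisely when it is positive. Since passing to $\bar{G}$ swaps the two signs, for any cycle $C$ the number of negative edges of $C$ computed in $\bar{G}$ equals the number of positive edges of $C$ in $G$; hence $C$ is dually negative exactly when $C$ has an odd number of positive edges in $G$. For a triangle this means $1$ or $3$ positive edges, i.e.\ $2$ or $0$ negative edges, which is exactly the condition that the triangle is positive. This settles one direction at once: a positive triangle is a dually negative cycle, so if $G$ has a positive triangle then by Theorem~\ref{nobad} the dual $\bar{G}$ is not balanced.

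For the converse I would suppose $\bar{G}$ is not balanced, take a \emph{shortest} dually negative cycle $C$ in $G$, and show $|C|=3$. If $|C|\ge 4$, then chordality supplies a chord $e$ splitting $C$ into two strictly shorter cycles $C_1$ and $C_2$ that share only $e$ and whose remaining edges partition the edges of $C$. Writing $p(\cdot)$ for the number of positive edges, I obtain $p(C_1)+p(C_2)=p(C)+2\cdot[\,e\text{ positive}\,]\equiv p(C)\pmod{2}$. Since $C$ is dually negative, $p(C)$ is odd, so exactly one of $p(C_1),p(C_2)$ is odd; that cycle is a strictly shorter dually negative cycle, contradicting the minimality of $C$. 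Hence $|C|=3$, i.e.\ $C$ is a dually negative triangle, which by the parity computation above is a positive triangle.

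The only real content is the shortest-cycle-plus-chord argument, and the point to watch is that a non-adjacent chord genuinely yields two cycles both of length at least $3$ and strictly less than $|C|$, and that their positive-edge counts add modulo $2$ as claimed; both facts are routine once the sign-swapping observation is in place. Everything else is bookkeeping with Theorem~\ref{nobad} and the definition of the dual, so I expect no serious obstacle.
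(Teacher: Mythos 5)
Your proof is correct and follows essentially the same route as the paper: both translate the statement via Theorem~\ref{nobad} into the nonexistence of a dually negative cycle, take a shortest dually negative cycle, and use a chord together with a parity count of positive edges to produce a shorter one. The only (minor) difference is that you split the cycle at an arbitrary chord into two sub-cycles whose positive-edge counts sum to $p(C)$ modulo $2$, whereas the paper first extracts a triangle on three consecutive vertices and shortcuts the cycle through it; your variant has the small advantage of needing only the bare definition of chordality rather than the short-chord refinement.
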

\begin{proof}
If $G$ contains a positive triangle, then, by Theorem \ref{nobad},
$\bar{G}$ is not balanced.

Now suppose that $G$ is not balanced. By Theorem \ref{nobad}, $G$ 
contains a dually negative cycle, i.e., a cycle
with odd number of positive edges, but all triangles in $G$ are
negative by hypothesis. Let $C=v_1v_2\dots v_lv_1$ be a dually negative cycle of
minimum length and note that
$l>3$ as a dually negative triangle is positive. Since the graph is
chordal, we can find three consecutive vertices of $C$ that form a
triangle $T$. Suppose $T=v_1v_2v_3v_1$. Recall that $T$ is negative.
So,
if both $v_1v_2$ and $v_2v_3$ are positive edges (or negative edges),
then $v_1v_3$
must be a negative edge; otherwise if one of the two edges is positive
and the other negative, then $v_1v_3$ is a positive edge. In both
cases, we conclude that
$C$ contains an odd number of positive edges if and only if
$C'=v_1v_3v_4\dots v_lv_1$ does, which is a contradiction since we
supposed $l$ to be the minimum length of a dually negative cycle.
\end{proof}

\begin{corollary}\label{cor:equiv}
Let $(G=(V,E),k)$ be an instance $\cal I$ of \smc, let $X \subseteq
V(G)$ and let $G[X]$ be a chordal graph which does not contain a positive
triangle. Then there exists a set $W \subseteq X$, such that $\widetilde{\cal
I}=(G_W,k)$ is equivalent to $\cal I$, and $G_W[X]$ does not contain
positive edges.
\end{corollary}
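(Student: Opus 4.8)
The plan is to reduce the required property to the balancedness of the dual graph $\overline{G[X]}$ and then invoke the switching characterization of balancedness. The key observation is that ``having no positive edges'' is precisely the dual condition to ``having no negative edges'': a signed graph $H$ has no positive edges if and only if its dual $\overline{H}$ has no negative edges. Since $G[X]$ is chordal and contains no positive triangle, Corollary~\ref{notriangle} gives that $\overline{G[X]}$ is balanced. Hence, by the characterization of balancedness in terms of switching (a balanced graph $H$ admits a subset $W$ with $H_W$ having no negative edges, found in polynomial time by Theorem~\ref{thm:bal}), there is a set $W \subseteq X$ such that $(\overline{G[X]})_W$ has no negative edges, and therefore, reading the dual condition backwards, $(G[X])_W$ has no positive edges.

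The only steps that need to be checked with some care are two commutation/compatibility identities for the switch operation. First, switching commutes with dualization: $\overline{(G[X])_W} = (\overline{G[X]})_W$ for every $W \subseteq X$. This holds because a $W$-switch flips exactly the signs of the edges crossing the cut $(W, X\setminus W)$, whereas dualization flips all signs, so the two operations act on disjoint ``coordinates'' of the sign function and hence commute. Second, because $W \subseteq X$, switching $W$ in the whole graph $G$ induces exactly the $W$-switch of the induced subgraph, i.e. $G_W[X] = (G[X])_W$: the sign of an edge with both endpoints in $X$ is flipped by the $W$-switch of $G$ precisely when exactly one endpoint lies in $W$, which is the condition defining $(G[X])_W$. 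Combining these two identities with the previous paragraph shows that $G_W[X]$ has no positive edges, which is the second assertion of the corollary.

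It then remains to verify that $\widetilde{\cal I}=(G_W,k)$ is equivalent to ${\cal I}=(G,k)$, and this is immediate from Corollary~\ref{cor:switch}: switching preserves both the Poljak--Turz\'{i}k bound and the maximum size of a balanced subgraph, so $\pt{G_W}=\pt{G}$ and $\beta(G_W)=\beta(G)$. As $G_W$ and $G$ share the same vertex and edge sets, the threshold $\pt{G}+\frac{k}{4}$ is unchanged, so the two instances have the same answer and the parameter is preserved. I expect the genuinely delicate point to be the commutation of switching with dualization together with the identity $G_W[X]=(G[X])_W$; once these are in place, the result follows by direct appeal to Corollaries~\ref{notriangle} and \ref{cor:switch} and Theorem~\ref{thm:bal}.
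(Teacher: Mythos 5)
Your proof is correct and follows essentially the same route as the paper: apply Corollary~\ref{notriangle} to get that the dual of $G[X]$ is balanced, switch on a suitable $W\subseteq X$ so that the induced subgraph loses all its positive edges, and invoke Corollary~\ref{cor:switch} for equivalence of the instances; you merely spell out the commutation of switching with dualization and the identity $G_W[X]=(G[X])_W$, which the paper leaves implicit. (One small quibble: dualization and $W$-switching do not act on ``disjoint coordinates''---both flip the signs of crossing edges---but they still commute since each edge's final sign depends only on the parity of the number of flips it receives.)
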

\begin{proof}
By Corollary
\ref{notriangle}, $\bar{G}[X]$ is balanced: hence, by definition of
balanced graph,
there exists $W\subseteq X$ such that $\bar{G}_W[X]$ contains only
positive edges, which means that $G_W[X]$ contains only negative edges.
By Corollary \ref{cor:switch}, $(G_W,k)$ is an instance equivalent to the
original one.
\end{proof}

Lastly, the next lemmas describe useful properties of \mc\ which still hold for \smc.

\begin{lemma}\label{half}
Let $G=(V,E)$ be a connected signed graph and let $V=U\cup W$  such that $U\cap W=\emptyset$, $U\neq \emptyset$ and $W\neq \emptyset$. Then $\beta(G)\ge \beta(G[U])+\beta(G[W])+\frac{1}{2}|E(U,W)|$. In addition, if $G[U]$ has $c_1$ components, $G[W]$ has $c_2$ components, $\beta(G[U])\geq\pt{G[U]}+\frac{k_1}{4}$ and $\beta(G[W])\geq\pt{G[W]}+\frac{k_2}{4}$, then $\beta(G)\geq\pt{G}+\frac{k_1+k_2-(c_1+c_2-1)}{4}$.
\end{lemma}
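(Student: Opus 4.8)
The plan is to use the partition description of $\beta$ that follows directly from the definition of a balanced graph: for a partition $V(H)=V_1\cup V_2$, call an edge \emph{consistent} with $(V_1,V_2)$ if it is positive with both endpoints in one part or negative with its endpoints in different parts. The edges consistent with a fixed partition form a $(V_1,V_2)$-balanced, hence balanced, subgraph, and conversely every balanced subgraph arises in this way; thus $\beta(H)$ equals the maximum, over all partitions of $V(H)$, of the number of consistent edges. With this reformulation the first inequality becomes the heart of the lemma, and the second is then pure bookkeeping with the Poljak--Turz\'ik formula.

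For the first inequality I would fix partitions $(U_1,U_2)$ of $U$ and $(W_1,W_2)$ of $W$ realising $\beta(G[U])$ and $\beta(G[W])$ respectively, and consider the two partitions of $V$ given by $(U_1\cup W_1,\,U_2\cup W_2)$ and $(U_1\cup W_2,\,U_2\cup W_1)$. Restricting either partition to $U$ gives $(U_1,U_2)$, and restricting either to $W$ gives $(W_1,W_2)$ (up to relabelling), so in both partitions every edge inside $U$ and every edge inside $W$ is consistent exactly as before; hence each of the two partitions already accounts for at least $\beta(G[U])+\beta(G[W])$ consistent edges. The key observation is that swapping the roles of $W_1$ and $W_2$ flips the ``same part / different part'' status of \emph{every} edge of $E(U,W)$, and therefore flips its consistency; consequently the numbers of consistent cross edges under the two partitions sum to exactly $|E(U,W)|$. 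Choosing the better of the two partitions yields at least $\frac12|E(U,W)|$ consistent cross edges, and since $\beta(G)$ is at least the consistent-edge count of any partition of $V$, we obtain $\beta(G)\ge\beta(G[U])+\beta(G[W])+\frac12|E(U,W)|$.

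For the second statement I would substitute the hypotheses $\beta(G[U])\ge\pt{G[U]}+\frac{k_1}{4}$ and $\beta(G[W])\ge\pt{G[W]}+\frac{k_2}{4}$ into the first inequality and expand using $\pt{G[U]}=\frac{|E(G[U])|}{2}+\frac{|U|-c_1}{4}$ and $\pt{G[W]}=\frac{|E(G[W])|}{2}+\frac{|W|-c_2}{4}$. The edge counts combine via $|E(G[U])|+|E(G[W])|+|E(U,W)|=|E(G)|$ and the vertex counts via $|U|+|W|=|V(G)|$, so the right-hand side collapses to $\frac{|E(G)|}{2}+\frac{|V(G)|-c_1-c_2}{4}+\frac{k_1+k_2}{4}$. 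Using that $G$ is connected, so $\pt{G}=\frac{|E(G)|}{2}+\frac{|V(G)|-1}{4}$, and rewriting $\frac{|V(G)|-c_1-c_2}{4}=\frac{|V(G)|-1}{4}-\frac{c_1+c_2-1}{4}$, the bound becomes exactly $\beta(G)\ge\pt{G}+\frac{k_1+k_2-(c_1+c_2-1)}{4}$.

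I expect the only genuine obstacle to be the first inequality, specifically justifying the ``flip'' observation: one must verify that exchanging $W_1$ and $W_2$ leaves the consistency of every edge within $U$ and within $W$ untouched while reversing the consistency of every edge of $E(U,W)$, which is precisely what makes the two cross-edge counts complementary and forces one of them to be at least $\frac12|E(U,W)|$. Everything after this point is a routine substitution into the Poljak--Turz\'ik formula, so no further difficulty is anticipated.
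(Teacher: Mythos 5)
Your proposal is correct and follows essentially the same route as the paper: the two partitions $(U_1\cup W_1,\,U_2\cup W_2)$ and $(U_1\cup W_2,\,U_2\cup W_1)$ correspond exactly to the paper's complementary cross-edge sets $E_1$ and $E_2$, whose union is $E(U,W)$, so the better partition captures at least half the cross edges; the second part is the same substitution into the Poljak--Turz\'ik formula that the paper summarises as $\pt{G}=\pt{G[U]}+\pt{G[W]}+\frac{1}{2}|E(U,W)|+\frac{c_1+c_2-1}{4}$.
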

\begin{proof}
Let $H$ ($F$) be a balanced subgraph of $G[U]$ ($G[W]$) with maximum number of edges and let $H$ ($F$) be $(U_1,U_2)$-balanced ($(W_1,W_2)$-balanced). Let 
$E_1=E^+(U_1,W_1)\cup E^+(U_2,W_2)\cup E^-(U_1,W_2)\cup E^-(U_2,W_1)$
and $E_2=E(U,W)\setminus E_1$. Observe that both $E(H)\cup E(F)\cup E_1$ and $E(H)\cup E(F)\cup E_2$ induce balanced subgraphs of $G$ and the largest of them has at least 
$\beta(G[U])+\beta(G[W])+\frac{1}{2}|E(U,W)|$ edges.

Now, observe that $\pt{G}=\pt{G[U]}+\pt{G[W]}+\frac{1}{2}|E(U,W)|+\frac{c_1+c_2-1}{4}$. Hence $\beta(G)\geq\pt{G}+\frac{k_1+k_2-(c_1+c_2-1)}{4}$.
\end{proof}

\begin{lemma}\label{cutvertex}
Let $G=(V,E)$ be a signed graph, $v\in V$ a cutvertex, $Y$ a connected component of $G-v$ and $G'=G-Y$. Then $\pt{G}=\pt{G[V(Y)\cup\{v\}]}+\pt{G'}$ and $\beta(G)=\beta(G[V(Y)\cup\{v\}])+\beta(G')$.
\end{lemma}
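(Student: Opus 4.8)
The plan is to exploit the fact that, with $v$ a cutvertex and $Y$ an entire component of $G-v$, the edge set of $G$ splits cleanly. Write $G_1=G[V(Y)\cup\{v\}]$ and $G_2=G'=G-Y$, so that $V(G_1)\cap V(G_2)=\{v\}$ and $V(G_1)\cup V(G_2)=V$. The crucial structural observation is that $E(G_1)$ and $E(G_2)$ partition $E(G)$: every edge incident to a vertex of $Y$ either stays inside $Y$ or goes to $v$ (since $Y$ is a connected component of $G-v$), so no edge joins $V(Y)$ to $V\setminus(V(Y)\cup\{v\})$. Everything below rests on this partition.

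For the Poljak--Turz\'ik identity I would simply count. The partition gives $|E(G_1)|+|E(G_2)|=|E(G)|$ and $|V(G_1)|+|V(G_2)|=|V(G)|+1$. For the component counts $t(\cdot)$, since $G_1$ and $G_2$ meet in exactly the single vertex $v$, the component of $v$ in $G_1$ and the component of $v$ in $G_2$ merge into one component of $G$ while all other components are untouched, whence $t(G_1)+t(G_2)=t(G)+1$. Substituting these three identities into $\pt{H}=\frac{|E(H)|}{2}+\frac{|V(H)|-t(H)}{4}$, the two spurious $+1$'s cancel and we obtain $\pt{G_1}+\pt{G_2}=\pt{G}$. (If one reads ``cutvertex'' as presupposing $G$ connected, then all three graphs are connected and this is just the case $t=1$.)

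The substance of the lemma is the additivity of $\beta$, for which the tool is Harary's characterization, Theorem \ref{Htheorem}: a signed graph is balanced iff every cycle is positive. The decisive point is that, because $G_1$ and $G_2$ share only $v$, any cycle of $G$ using edges of both parts would have to pass through $v$ twice; hence every cycle of $G$ lies entirely in $G_1$ or entirely in $G_2$. For $\beta(G)\ge\beta(G_1)+\beta(G_2)$, take maximum balanced subgraphs $H_1\subseteq G_1$ and $H_2\subseteq G_2$: their edge sets are disjoint, every cycle of $H_1\cup H_2$ lies in one of the two and is therefore positive, so by Theorem \ref{Htheorem} the union is a balanced subgraph of $G$ with $\beta(G_1)+\beta(G_2)$ edges. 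For the reverse inequality, take a maximum balanced subgraph $H$ of $G$ and set $H_i=H\cap E(G_i)$; then $|E(H)|=|E(H_1)|+|E(H_2)|$ by the partition, and each $H_i$ is a subgraph of the balanced graph $H$, hence has all cycles positive and is balanced, so $|E(H_i)|\le\beta(G_i)$. Combining the two inequalities yields $\beta(G)=\beta(G_1)+\beta(G_2)$.

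The main obstacle --- really the only nonroutine point --- is justifying that no cycle crosses the cutvertex, which is what lets balancedness be verified part-by-part through Theorem \ref{Htheorem}; once that and the edge-partition claim are in place, both inequalities and the counting are immediate. I would therefore state the edge partition carefully at the outset, as it underlies both halves of the proof.
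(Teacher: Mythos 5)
Your proof is correct, and it differs from the paper's in how it establishes the inequality $\beta(G)\ge\beta(G_1)+\beta(G_2)$. The paper works directly with the partition definition of balance: it takes maximum balanced subgraphs $H_1$, $H_2$ induced by partitions $(V_1^1,V_2^1)$ and $(V_1^2,V_2^2)$, normalises so that $v\in V_1^1\cap V_1^2$ (permissible since swapping the two sides preserves balance), and observes that the partition $(V_1^1\cup V_1^2,\,V_2^1\cup V_2^2)$ of $V(G)$ induces a balanced subgraph containing $E(H_1)\cup E(H_2)$. You instead take the union of $H_1$ and $H_2$ and certify its balance via Theorem \ref{Htheorem}, using the observation that no cycle of $G$ can cross the cutvertex, so every cycle of the union lies wholly in one part and is positive. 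The two arguments buy slightly different things: the paper's gluing is self-contained and makes explicit the one step that actually needs care (aligning the two partitions at $v$), which is the same device used elsewhere in the paper (e.g.\ in the proof of Lemma \ref{half}); your route outsources that step to Harary's theorem and in exchange must justify the ``no cycle crosses $v$'' claim, which you do correctly. The reverse inequality and the $\pt{\cdot}$ identity are handled essentially as in the paper (the paper dismisses the latter as ``easily verified''; your explicit count, including the component bookkeeping $t(G_1)+t(G_2)=t(G)+1$, is a valid verification and covers the disconnected case as well).
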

\begin{proof}
The first equality is easily verified. Concerning the other, let $H_1$ be a $(V_1^1,V_2^1)$-balanced subgraph of $G[V(Y)\cup\{v\}]$ of size $\beta(G[V(Y)\cup\{v\}])$ and $H_2$ be a $(V_1^2,V_2^2)$-balanced subgraph of $G'$ of size $\beta(G')$. One may assume that $v\in V_1^i$ for $i=1,2$. Therefore the balanced subgraph $H$ of $G$ induced by $V_1=V_1^1\cup V_1^2$ and $V_2=V_2^1\cup V_2^2$ is of size $\beta(G[V(Y)\cup\{v\}])+\beta(G')$, which means that $\beta(G)\geq\beta(G[V(Y)\cup\{v\}])+\beta(G')$.
On the other hand, any balanced subgraph $H$ of $G$ induces balanced subgraphs of $G[V(Y)\cup\{v\}]$ and $G'$, which implies that $\beta(G)\leq\beta(G[V(Y)\cup\{v\}])+\beta(G')$.
\end{proof}

\section{Fixed-Parameter Tractability}\label{sec:fpt}

In this section, we prove that \smc\ is FPT by designing an algorithm of running time\footnote{In the $O^*$-notation widely used in parameterized algorithmics, 
we omit not only constants, but also polynomial factors.}$O^*(8^{k})$.
This algorithm is a generalization of the FPT algorithm obtained in \cite{CroJonesMnich} to solve \mc. 
Given an instance $(G=(V,E),k)$ of \mc, the algorithm presented in \cite{CroJonesMnich} applies some reduction rules that either answer {\sc Yes} for \mc\ or
produce a set $S$ of at most $3k$ vertices such that $G-S$ is a forest of cliques.

A key idea of this section is that it is possible to extend these rules such that we include into $S$ at least one vertex for every dually negative cycle of
$G$.  As a result, Theorem \ref{nobad} ensures that solving \smc\ on $G-S$ is equivalent to solving \mc. Therefore,
it is possible to guess a partial solution on $S$ and then solve {\sc Max-Cut-with-Weighted-Vertices}\footnote{This problem is defined just before Theorem \ref{thm:fpt}.} on $G-S$.
Since a forest of cliques is a chordal graph, Corollary \ref{notriangle} implies that it is enough to put into $S$ at least one vertex for
every positive triangle in $G$ (instead of every dually negative cycle).
Our reduction rules are inspired by the rules used in \cite{CroJonesMnich}, but our rules are more involved in order to deal with positive triangles.

The rules apply to an instance $(G,k)$ of \smc\
and output an instance $(G',k')$ where $G'$ is obtained by deleting some vertices
of $G$. In addition, the rules can mark some of the deleted vertices: marked vertices will form the set $S$ such that $G - S$ is a forest
of cliques. Note that every time a rule marks some vertices, it also decreases the parameter $k$.

The instance $(G',k')$ that the rules produce does not have to be equivalent to $(G,k)$, but it has the property that if it is a {\sc Yes}-instance, then $(G,k)$
is a {\sc Yes}-instance too. For this reason, these rules are called {\em one-way} reduction rules \cite{CroGutJon}.

Note that in the description of the rules, $G$ is a connected signed graph,  and $C$ and $Y$ denote connected components of a signed graph such that $C$ is a clique which does not contain a positive triangle.

\begin{rrule}\label{rule:positivetriangle}\label{rule:first}
If $abca$ is a positive triangle such that $G-\{a,b,c\}$ is connected, then mark $a,b,c$, delete them
and set $k'=k-3$.
\end{rrule}

\begin{rrule}\label{rule:positivetriangle2}
If $abca$ is a positive triangle such that $G-\{a,b,c\}$ has two connected components
$C$ and $Y$, 
then mark $a,b,c$, delete them, delete $C$, and set $k'=k-2$.
\end{rrule}

\begin{rrule}\label{rule:pair}
Let $C$ be a connected component of $G-v$ for some vertex $v\in G$.
If there exist $a,b\in V(C)$ such that $G-\{a,b\}$ is connected and there is an edge $av$ but no edge $bv$, then mark $a$ and $b$, delete them and set $k'=k-2$.
\end{rrule}

\begin{rrule}\label{rule:pair2}
Let $C$ be a connected component of $G-v$ for some vertex $v\in G$.
If there exist $a,b\in C$ such that $G-\{a,b\}$ is connected and $vabv$ is a positive triangle, then
mark $a$ and $b$, delete them and set $k'=k-4$.
\end{rrule}

\begin{rrule}\label{rule:clique}
If there is a vertex $v\in V(G)$ such that
$G-v$ has a connected component $C$,
$G[V(C)\cup\{v\}]$ is a clique in $G$,
and $G[V(C)\cup\{v\}]$ does not contain a positive triangle,
then delete $C$ and set $k'=k$.
\end{rrule}

\begin{rrule}\label{rule:path}
If $a,b,c\in V(G)$, $\{ab,bc\}\subseteq E(G)$ but $ac\notin E(G)$, and $G-\{a,b,c\}$ is connected,
then mark $a,b,c$, delete them and set $k'=k-1$.
\end{rrule}

\begin{rrule}\label{rule:pincer}\label{rule:last}
Let $C,Y$ be the connected components of
$G-\{v,b\}$ for some vertices $v,b\in V(G)$ such that $vb\notin E(G)$. If $G[V(C)\cup\{v\}]$ and $G[V(C)\cup\{b\}]$ are cliques not containing any positive triangles,
then mark $v$ and $b$, delete them, delete $C$ and set $k'=k-1$.
\end{rrule}

\begin{definition}
A one-way reduction rule is {\em safe} if it does not transform a {\sc No}-instance into a {\sc Yes}-instance.
\end{definition}

The intuitive understanding of how a one-way reduction rule works is that it removes
a portion of the graph (while decreasing the parameter from $k$ to $k'$) only if given any solution (i.e., a balanced subgraph) on the rest of the graph
there is a way to extend it to the removed portion while always gaining an additional $k-k'$ over the Poljak-Turz\'ik bound.


\begin{lemma}\label{notriangleincliques}
Let $G$ be a connected graph. If $C$ is a clique of $G$ such that $G-C$ is connected and if $C$ contains a positive triangle, then either Rule \ref{rule:positivetriangle}
 or Rule \ref{rule:positivetriangle2} applies.
\end{lemma}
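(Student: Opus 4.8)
The plan is to analyse the components of $G-\{a,b,c\}$ for a suitably chosen positive triangle $\{a,b,c\}$ lying inside the clique. To avoid a clash with the component named $C$ in Rule~\ref{rule:positivetriangle2}, I would write $K$ for the clique in the statement and set $R=V(G)\setminus V(K)$, so that $G[R]$ is connected by hypothesis. The first thing to record is a purely structural observation: for \emph{any} three vertices $a,b,c\in V(K)$, the graph $G-\{a,b,c\}$ has at most two components. Indeed, its vertex set splits as $K'\cup R$ with $K'=V(K)\setminus\{a,b,c\}$; since $K'$ induces a clique (hence a connected graph when nonempty) and $R$ induces a connected graph (we delete no vertex of $R$), the only two possible pieces are $G[K']$ and $G[R]$, which are either joined by an edge (one component) or not (two components).

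With this in hand I would split into two cases according to connectivity. If there is a positive triangle $\{a,b,c\}$ in $K$ for which $G-\{a,b,c\}$ is connected, then Rule~\ref{rule:positivetriangle} applies and we are done; note this already absorbs the degenerate situations $R=\emptyset$ (then $G=K$ and deleting the triangle leaves the connected clique $G[K']$) and $|V(K)|=3$ (then the triangle is all of $K$ and $G-\{a,b,c\}=G[R]$). Otherwise, by the structural observation \emph{every} positive triangle $T$ of $K$ leaves $G-V(T)$ with exactly two components, namely $G[V(K)\setminus V(T)]$ and $G[R]$, with no edge between them. In particular $R\neq\emptyset$, so, as $G$ is connected, some vertex of $K$ has a neighbour in $R$; call such vertices \emph{attachment vertices}. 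The absence of any edge between $V(K)\setminus V(T)$ and $R$ forces every attachment vertex into $V(T)$, so in this case \emph{every} positive triangle of $K$ contains the whole (nonempty) set of attachment vertices.

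To finish I would fix any attachment vertex $v$ and any positive triangle $\{a,b,c\}$ of $K$; by the previous sentence $v\in\{a,b,c\}$. The clique-side component $G[V(K)\setminus\{a,b,c\}]$ is then a clique not containing $v$, so any positive triangle inside it would be a positive triangle of $K$ avoiding $v$, contradicting that every positive triangle of $K$ contains $v$. Hence $G[V(K)\setminus\{a,b,c\}]$ is a clique with no positive triangle, and together with $G[R]$ it exhibits $G-\{a,b,c\}$ as having exactly two components of the form required by Rule~\ref{rule:positivetriangle2} (with that rule's $C$ taken to be $G[V(K)\setminus\{a,b,c\}]$ and $Y=G[R]$). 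Thus Rule~\ref{rule:positivetriangle2} applies.

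The genuine content sits entirely in this last step. Knowing that deleting a positive triangle disconnects $G$ is \emph{not} enough on its own, because the clique side could itself still contain a positive triangle and so fail the hypothesis of Rule~\ref{rule:positivetriangle2}. The hard part will therefore be the observation that in the disconnected case all positive triangles must share the attachment vertices; this is exactly what lets me convert ``no positive triangle separates $G$ into two pieces'' into ``the clique remainder is positive-triangle-free,'' which is the precondition Rule~\ref{rule:positivetriangle2} needs. Everything else is the routine bookkeeping of components and the few degenerate cases, all of which collapse into the connected case handled by Rule~\ref{rule:positivetriangle}.
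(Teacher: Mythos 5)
Your proof is correct and follows the same route as the paper's: test whether deleting a positive triangle leaves $G$ connected (so Rule~\ref{rule:positivetriangle} applies), and otherwise show that the clique-side component of $G-\{a,b,c\}$ contains no positive triangle, so Rule~\ref{rule:positivetriangle2} applies. The only difference is one of detail: your attachment-vertex argument spells out the step the paper compresses into ``$C-\{a,b,c\}$ cannot contain a positive triangle, or otherwise Rule~\ref{rule:positivetriangle} would have applied.''
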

\begin{proof}
Let $abca$ be a positive triangle in $C$. Suppose Rule \ref{rule:positivetriangle} does not apply. This means that $G-\{a,b,c\}$ is not connected: more precisely, $G-\{a,b,c\}$ has two components $G-C$ and $C-\{a,b,c\}$. Note that $C-\{a,b,c\}$ cannot contain a positive triangle, or otherwise Rule \ref{rule:positivetriangle} would have applied. Therefore, Rule \ref{rule:positivetriangle2} applies.
\end{proof}


\begin{theorem}\label{lem-safe:positivetriangle}
Rules \ref{rule:first}-\ref{rule:last} are safe.
\end{theorem}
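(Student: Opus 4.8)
The plan is to prove each rule safe by reducing safety to a single surplus inequality. Recall that a rule is safe precisely when it never turns a {\sc No}-instance into a {\sc Yes}-instance, i.e. when $(G',k')$ being a {\sc Yes}-instance forces $(G,k)$ to be one. So for each rule it suffices to establish
\begin{equation}\label{eq:surplus}
\beta(G)-\pt{G}\ \ge\ \bigl(\beta(G')-\pt{G'}\bigr)+\tfrac{k-k'}{4}.
\end{equation}
Indeed, if $(G',k')$ is a {\sc Yes}-instance then $\beta(G')\ge\pt{G'}+\tfrac{k'}{4}$, and feeding this into \eqref{eq:surplus} yields $\beta(G)\ge\pt{G}+\tfrac{k}{4}$, which is exactly the contrapositive of safety.

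The common setup is as follows. Let $D=V(G)\setminus V(G')$ be the set of deleted vertices (the marked vertices, together with the vertices of the deleted clique $C$ where present), and write $W:=V(G')$, so that $D$ and $W$ partition $V(G)$. First I would compute $\pt{G}-\pt{G'}$ directly from the vertex, edge and component counts of the deleted part, using $\pt{H}=\tfrac{|E(H)|}{2}+\tfrac{|V(H)|-t}{4}$ for a graph with $t$ components, and invoking Lemma \ref{cutvertex} for the rules that delete a whole clique. Second, starting from an optimal balanced subgraph of $G'$ with its two-colouring of $W$ already fixed, I would extend it to a balanced subgraph of $G$ by choosing the sides of the marked vertices and, where a clique $C$ is deleted, re-attaching $C$; the number of newly satisfied edges then lower-bounds $\beta(G)-\beta(G')$. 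Throughout I may assume, by Corollary \ref{cor:equiv} together with Corollary \ref{cor:switch} (both of which preserve $\beta$ and $\pt$, hence \eqref{eq:surplus}), that every clique $C$ arising in the rules has been switched so as to contain only negative edges.

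The gain accounting then splits into three groups. For the rules whose marked set is cut off from the rest and induces a balanced graph (Rules \ref{rule:positivetriangle} and \ref{rule:path}), a single application of Lemma \ref{half} with $U=D$ suffices: a positive triangle is balanced with $\beta=3=\pt{}+1$ and a path on three vertices is balanced with $\beta=2=\pt{}+\tfrac12$, so the guaranteed $3$ (resp.\ $2$) internal edges plus half of the crossing edges $E(D,W)$ meet the right-hand side of \eqref{eq:surplus} exactly. For the rules that delete a whole clique $C$ (Rules \ref{rule:positivetriangle2}, \ref{rule:clique} and \ref{rule:pincer}) the deleted clique always contributes at least its Poljak--Turz\'ik bound, since $\beta(C)\ge\pt{C}$; this is cleanest in Rule \ref{rule:clique}, where Lemma \ref{cutvertex} at the cutvertex $v$ gives $\beta(G)-\pt{G}=(\beta(C')-\pt{C'})+(\beta(G')-\pt{G'})$ with $C'=G[V(C)\cup\{v\}]$ and $k'=k$, while for Rules \ref{rule:positivetriangle2} and \ref{rule:pincer} a nested use of Lemma \ref{half} lets the clique wash out and the leftover surplus comes from the positive triangle (Rule \ref{rule:positivetriangle2}) or the pincer adjacency (Rule \ref{rule:pincer}). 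Finally, for Rules \ref{rule:pair} and \ref{rule:pair2} the generic half-of-the-crossing-edges estimate of Lemma \ref{half} falls short by a constant, so here I would exploit the precise hypotheses: for Rule \ref{rule:pair2} the positive triangle $vab$ (with $v$ already coloured in $W$) can be fully satisfied, and for Rule \ref{rule:pair} the asymmetry $av\in E,\ bv\notin E$ together with $C$ being an all-negative clique forces the extra gain. In each case a short case analysis over the at most four placements of the two marked vertices shows the required number of incident edges can always be satisfied.

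The main obstacle is precisely this last, tight bookkeeping. Because \eqref{eq:surplus} is tight for several rules, the crude ``half of the crossing edges'' bound is not enough, and the extra $\tfrac14$ to $1$ of gain must be squeezed out of the structural conditions---positive triangles being fully satisfiable and the deleted component being an all-negative clique---rather than from generic inequalities. The delicate point in the case analysis is to choose the sides of the marked vertices so as to gain enough incident edges \emph{while respecting the already-fixed colours} of the vertices of $G'$ (in particular the vertex $v$ and the vertices of $C\setminus D$) adjacent to the marked set; verifying that such a choice always exists is where essentially all of the work lies.
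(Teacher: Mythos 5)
Your overall strategy coincides with the paper's: reduce safety of each one-way rule to the surplus inequality $\beta(G)-\pt{G}\ge(\beta(G')-\pt{G'})+\frac{k-k'}{4}$, use Corollary \ref{cor:equiv} to make the relevant cliques all-negative, dispatch Rules \ref{rule:positivetriangle}, \ref{rule:positivetriangle2}, \ref{rule:clique} and \ref{rule:path} via Lemma \ref{half} (or Lemma \ref{cutvertex}) together with $\beta(\cdot)\ge\pt{\cdot}$ and the exact surplus of a balanced triangle or $P_3$, and handle Rules \ref{rule:pair} and \ref{rule:pair2} by explicitly placing $a$ and $b$ on opposite sides so that half of $E(\{a,b\},C\setminus\{a,b\})$ plus the edges $ab$, $av$ (and, for Rule \ref{rule:pair2}, $bv$) are satisfied; this is exactly what the paper does. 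The one place your plan goes astray is Rule \ref{rule:pincer}: you file it with the rules for which ``$\beta(C)\ge\pt{C}$ plus nested Lemma \ref{half}'' suffices, but that combination cannot produce the required surplus of $\frac{k-k'}{4}=\frac14$ --- each application of Lemma \ref{half} across two connected pieces costs $\frac{c_1+c_2-1}{4}\ge\frac14$, and the generic bound $\beta(C)\ge\pt{C}$ contributes nothing to pay for it, so you end up at $\pt{G}+\frac{k'-1}{4}$ rather than $\pt{G}+\frac{k'+1}{4}$. The paper instead proves the sharper claim $\beta(G'')\ge\pt{G''}+\frac12$ for $G''=G[V(C)\cup\{v,b\}]$ by an explicit construction: after switching so that $G''$ is all-negative, it takes a near-equal bipartition of $V(G'')$ with \emph{both} $v$ and $b$ on the same side (legitimate precisely because $vb\notin E(G)$, so no cut edge is lost there), and checks the two parity cases for $|V(C)|$. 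Only with this extra $\frac12$ in hand does a single application of Lemma \ref{half} against $Y$ give $\beta(G)\ge\pt{G}+\frac{k'+2-1}{4}=\pt{G}+\frac{k}{4}$. The missing ingredient is concrete and recoverable, but ``the pincer adjacency'' has to be cashed in as this explicit bipartition of the clique together with its two non-adjacent apexes, not as a generic appeal to the Poljak--Turz\'{\i}k bound of the deleted clique.
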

\begin{proof}

\noindent{\bf Rule \ref{rule:positivetriangle}:} Let $abca$ be a positive triangle  as in the description of Rule \ref{rule:positivetriangle}. Suppose $\beta(G')\geq\pt{G'}+\frac{k'}{4}$, where $k'=k-3$.
Since $abca$ is a positive triangle, by Lemma \ref{half}, we obtain $\beta(G)\geq\pt{G}+\frac{k'+3}{4}=\pt{G}+\frac{k}{4}$.

\smallskip 

\noindent{\bf Rule \ref{rule:positivetriangle2}:} Let $abca$ be a positive triangle such that $G-\{a,b,c\}$ has two components $C$ and $Y$. Suppose $\beta(Y)\geq\pt{Y}+\frac{k'}{4}$, where $k'=k-2$.
We know that $\beta(C)\geq\pt{C}$, and so by Lemma \ref{half} we obtain $\beta(G[V(Y)\cup V(C)])\geq\pt{G[V(Y)\cup V(C)]}+\frac{k'-1}{4}$.
Since $abca$ is a positive triangle, using Lemma \ref{half} again we obtain $\beta(G)\geq\pt{G}+\frac{k'+4-2}{4}=\pt{G}+\frac{k}{4}$.

\smallskip 

\noindent{\bf Rule \ref{rule:pair}:}
Let $v,a,b$ and $C$ be as in the description of Rule \ref{rule:pair}. Assume there exists a $(V_1',V_2')$-balanced subgraph $H'$ of $G'$ with at least $\pt{G'}+\frac{k'}{4}$ edges, where $k'=k-2$.
By Corollary \ref{cor:equiv}, we may assume that all edges in $C$ are negative. In addition, we may assume that the edge $av$ is negative (the other case is similar). Lastly, without loss of generality assume that $v\in V_1'$.
Now, consider the balanced subgraph $H$ of $G$ induced by $(V_1,V_2)$, where $V_1=V_1'\cup\{b\}$ and $V_2=V_2'\cup\{a\}$. Since $|E(a,V_1\cap V(C))\cup E(b,V_2\cap V(C))|=|E(a,V_2\cap V(C))\cup E(b,V_1\cap V(C))|$, it holds that $|E(H)|=|E(H')|+\frac{|E(\{a,b\},V[C-\{a,b\}])|}{2}+2$. Moreover, $\pt{G}=\pt{G'}+\frac{|E(\{a,b\},V[C-\{a,b\}])|}{2}+\frac{3}{2}$. Thus, $\beta(G)\geq\pt{G}+\frac{k'+2}{4}=\pt{G}+\frac{k}{4}$.

\smallskip 

\noindent{\bf Rule \ref{rule:pair2}:}
Let $v,a,b$ and $C$ be as in the description of Rule \ref{rule:pair2}. Assume there exists a $(V_1',V_2')$-balanced subgraph $H'$ of $G'$ with at least $\pt{G'}+\frac{k'}{4}$ edges, where $k'=k-4$.
As in the proof for Rule \ref{rule:pair}, assume $C$ only contains negative edges, the edge $av$ is negative and $v\in V_1'$. Since $vabv$ is a positive triangle, this implies that the edge $bv$ is positive.

Now, consider the balanced subgraph $H$ of $G$ induced by $(V_1,V_2)$, where $V_1=V_1'\cup\{b\}$ and $V_2=V_2'\cup\{a\}$. As in the proof for Rule \ref{rule:pair},
it holds that $|E(a,V_1\cap V(C))\cup E(b,V_2\cap V(C))|=|E(a,V_2\cap V(C))\cup E(b,V_1\cap V(C))|$. Hence, $|E(H)|=|E(H')|+\frac{|E(\{a,b\},V[C-\{a,b\}])|}{2}+3$, while $\pt{G}=\pt{G'}+\frac{|E(\{a,b\},V[C-\{a,b\}])|}{2}+2$. Thus, $\beta(G)\geq\pt{G}+\frac{k'+4}{4}=\pt{G}+\frac{k}{4}$.

\smallskip 

\noindent{\bf Rule \ref{rule:clique}:} Let $v$ and $C$ be as in the description of Rule \ref{rule:clique}. Suppose $\beta(G')\geq\pt{G'}+\frac{k}{4}$. We know that $\beta(G[V(C)\cup\{v\}])\geq\pt{G[V(C)\cup\{v\}]}$. Then, by Lemma \ref{cutvertex}, $\beta(G)\geq\pt{G}+\frac{k}{4}$.

\smallskip 

\noindent{\bf  Rule \ref{rule:path}:} Let $a,b,c$ be as in the description of Rule \ref{rule:path} and let $P=G[\{a,b,c\}].$ Note that $\pt{P}=\frac{2}{2}+\frac{3-1}{4}=\frac{3}{2}$ and,
whatever the signs of its edges, $P$ is a balanced graph by Theorem \ref{Htheorem}.
Therefore, $\beta(P)=2=\pt{P}+\frac{1}{2}$.
Suppose $\beta(G')\geq\pt{G'}+\frac{k'}{4}$, where $k'=k-1$. Then by Lemma \ref{half}, $\beta(G)\geq\pt{G}+\frac{k'+2-1}{4}=\pt{G}+\frac{k}{4}$ edges.

\smallskip 

\noindent{\bf  Rule \ref{rule:pincer}:} Let $v,b$ and $C,Y$ be as in the description of Rule \ref{rule:pincer}.
Suppose $\beta(Y)\ge \pt{Y}+\frac{k'}{4}$, where $k'=k-1$. We claim that $\beta(G-Y)\geq \pt{G-Y}+\frac{1}{2}$. If this holds, using Lemma \ref{half} we obtain that $\beta(G)\geq\pt{G}+\frac{k'+1}{4}=\pt{G}+\frac{k}{4}$.

Let $G''=G-Y$ and $V(C)=\{v_1,\dots,v_n\}$. Note that $\pt{G''}=\frac{n(n-1)}{4}+n+\frac{n+1}{4}$.

By Corollary \ref{cor:equiv} we may assume that $G''$ contains only negative edges. If $n$ is even, consider the partition $(V_1,V_2)$ of $V(G'')$ where $V_1=\{v,b,v_1,\dots,v_{\frac{n}{2}-1}\}$ and $V_2=\{v_{\frac{n}{2}},\dots,v_n\}$. The balanced subgraph induced by this partition contains $(\frac{n}{2}+1)^2=\pt{G''}+\frac{3}{4}$ edges. On the other hand, if $n$ is odd, consider the partition $(V_1,V_2)$ of $V(G'')$ where $V_1=\{v,b,v_1,\dots,v_{\frac{n-1}{2}}\}$ and $V_2=\{v_{\frac{n+1}{2}},\dots,v_n\}$. The balanced subgraph induced by this partition contains $\frac{n+3}{2}\cdot\frac{n+1}{2}= \pt{G''}+\frac{2}{4}$ edges.
\end{proof}

We now show that the reduction rules preserve connectedness and that there is always one of them which applies to a graph with at least one edge. To show this, we use the following lemma, based on a result in \cite{CroJonesMnich} but first expressed in the following form in \cite{CroGutJon}.

\begin{lemma}\cite{CroGutJon}\label{lem:slem}
For any connected graph $Q$, at least one of the following properties holds:
\begin{description}
\item[A] There exist $v \in V(Q)$ and $X\subseteq V(Q)$ such that $G[X]$ is a connected component of $Q-v$ and $G[X]$ is a clique;
\item[B] There exist $a,b,c \in V(Q)$ such that $Q[\{a,b,c\}]$ is isomorphic to path $P_3$ and $Q-\{a,b,c\}$ is connected;
\item[C] There exist $v,b\in V(Q)$ such that $vb \notin E(Q)$, $Q-\{v,b\}$ is disconnected, and for all connected components $G[X]$ of $Q-\{v,b\}$, except possibly one, $G[X \cup \{v\}]$ and $G[X \cup \{b\}]$ are cliques.
\end{description}
\end{lemma}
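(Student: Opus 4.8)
The plan is to induct on the structure of $Q$ through its block-cut tree, reducing everything to the analysis of a single \emph{end block} (a leaf of the block-cut tree). We may assume $Q$ has at least one edge, so its block-cut tree has a leaf block. Let $B$ be such an end block and let $u$ be its cut vertex, i.e. the unique vertex of $B$ shared with the rest of $Q$; if $Q$ is $2$-connected we simply take $B=Q$ and there is no such $u$. First I would dispose of the configurations that yield property A. If $B$ is a clique, then $V(B)\setminus\{u\}$ induces a clique which is a connected component of $Q-u$ (or, when $B=Q$, we delete any vertex), so A holds. More generally, if $B-u$ is a clique the same conclusion follows, since $B-u$ is exactly a component of $Q-u$. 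Hence we may assume that $B$ is $2$-connected, not complete, and that $B-u$ is connected but not complete; in particular $B-u$ contains an induced path on three vertices, and every such path avoids $u$.

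Next I would try to produce property B. Pick an induced path $x$--$y$--$z$ inside $B-u$ (so $xy,yz\in E$, $xz\notin E$, and $u\notin\{x,y,z\}$); such a path exists because $B-u$ is connected and not complete, e.g. take the first three vertices of a shortest path between a non-adjacent pair. If $B-\{x,y,z\}$ is connected then, since $u$ survives and the remainder of $Q$ hangs off $u$, the whole graph $Q-\{x,y,z\}$ is connected and B holds. So the interesting case is when $B-\{x,y,z\}$ splits into components $D_1,\dots,D_r$ with $r\ge 2$.

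The heart of the argument is to show that this splitting, if it cannot be repaired by choosing a different induced path, forces the pincer configuration of property C. Here I would exploit $2$-connectivity of $B$: since $B-x$, $B-y$ and $B-z$ are each connected and $xz\notin E$, each component $D_i$ must attach to a prescribed subset of $\{x,y,z\}$ (every $D_i$ meets $\{y,z\}$, every $D_i$ meets $\{x,y\}$), and at least one $D_i$ must be adjacent to both $x$ and $z$, because the $D_i$ have no edges between them and $B-y$ is connected. The plan is to use these attachment constraints either to (i) re-route to a new induced $P_3$ whose deletion keeps $B$ connected, recovering B, or to (ii) identify a non-adjacent pair (naturally the endpoints $x,z$, or one endpoint together with $u$) whose removal separates $B$ so that all but one of the resulting pieces are cliques joined completely to both separating vertices. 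In branch (ii) one obtains property C, with the single exceptional piece being the one containing $u$, which absorbs the rest of $Q$ attached at the cut vertex, so that C holds for $Q$ and not merely for $B$.

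The main obstacle is precisely this dichotomy: proving that the obstruction to deleting any admissible induced $P_3$ is rigid enough to be a pincer, i.e. that if every admissible path disconnects $B$ on deletion, then the components are forced to be cliques completely joined to two non-adjacent vertices. This is where a careful \emph{extremal} choice of the path is needed --- for instance choosing $x$--$y$--$z$ so that $Q-\{x,y,z\}$ has as few components as possible, or choosing the non-adjacent endpoints to minimise the associated separator --- in order to rule out the intermediate configurations that are neither removable nor yet cliquey. A secondary but necessary piece of bookkeeping is to track the cut vertex $u$ throughout, ensuring that whichever of A, B, C we extract inside the block $B$ lifts to the same property for the whole graph $Q$, with the component of $u$ playing the role of the one permitted exception in C.
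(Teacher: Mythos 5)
This lemma is imported from \cite{CroGutJon}; the paper contains no proof of its own, so your attempt can only be judged on its internal completeness. The outer shell of your argument is sound and matches the way the result is derived in the cited works: pass to a leaf block $B$ with cut vertex $u$, note that $B-u$ is a connected component of $Q-u$ so that property A disposes of the case where $B-u$ is a clique (and of bridge blocks), and otherwise look for an induced $P_3$ inside the $2$-connected, non-complete block whose deletion leaves $Q$ connected. Your attachment constraints for the components $D_1,\dots,D_r$ of $B-\{x,y,z\}$ (each $D_i$ is adjacent to $y$ or $z$, each is adjacent to $x$ or $y$, and at least one is adjacent to both $x$ and $z$) are correct consequences of the connectedness of $B-x$, $B-z$ and $B-y$.

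However, the proof stops exactly where the lemma becomes nontrivial. The entire content of the statement is the claim that if no induced $P_3$ can be deleted without disconnecting $Q$, and no single-vertex deletion leaves a clique component, then the rigid pincer structure of property C must appear: a single \emph{non-adjacent} pair $\{v,b\}$ such that every component of $Q-\{v,b\}$ except possibly one is a clique completely joined to both $v$ and $b$. You describe this step as ``the plan'', explicitly call it ``the main obstacle'', and defer it to ``a careful extremal choice of the path'' that you never specify. In particular, you do not show how to re-route to a removable $P_3$ when some $D_i$ fails to be a clique or fails to be completely joined to $x$ or to $z$; you do not rule out the possibility that the correct separating pair is not $\{x,z\}$ at all; and you do not verify that the exceptional component of C can always be arranged to be the one containing $u$ (and hence the rest of $Q$). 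As written this is an outline rather than a proof: the missing case analysis is precisely the content for which the paper cites \cite{CroGutJon}.
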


\begin{lemma}\label{lem:exhaust}
For a connected graph $G$ with at least one edge, at least one of Rules \ref{rule:first}-\ref{rule:last} applies. In addition, the graph $G'$
which is produced is connected.
\end{lemma}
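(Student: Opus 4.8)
The plan is to prove two things: that at least one of the Rules applies, and that the resulting graph $G'$ is connected. I would organize the argument around Lemma~\ref{lem:slem}, which hands us one of three structural situations (A, B, or C) for any connected graph. The strategy is to show that in each case, after possibly invoking Lemma~\ref{notriangleincliques} or Corollary~\ref{cor:equiv} to handle positive triangles, one of Rules~\ref{rule:first}--\ref{rule:last} is applicable.

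First I would dispose of the easy structural case. If case~B of Lemma~\ref{lem:slem} holds, there is an induced $P_3$ on $\{a,b,c\}$ with $G-\{a,b,c\}$ connected; this is exactly the hypothesis of Rule~\ref{rule:path}, so that rule applies. The bulk of the work is in cases~A and~C, where we are handed cliques, and here the positive triangles must be dealt with. In case~A we have a clique component $C$ of $G-v$. If the clique $G[V(C)\cup\{v\}]$ contains a positive triangle, then since $G-(V(C)\cup\{v\})$ is connected, Lemma~\ref{notriangleincliques} tells us one of Rules~\ref{rule:positivetriangle},~\ref{rule:positivetriangle2} applies. If it contains no positive triangle, then Rule~\ref{rule:clique} applies directly. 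The remaining subtlety in case~A is whether the rules that look for a non-adjacent pair (Rule~\ref{rule:pair}) or a positive triangle through $v$ (Rule~\ref{rule:pair2}) might be needed, depending on how $v$ attaches to $C$; I would argue that if some vertex of $C$ is adjacent to $v$ and some is not, Rule~\ref{rule:pair} fires, while if all of $C$ is adjacent to $v$ then $G[V(C)\cup\{v\}]$ is itself a clique and we reduce to the positive-triangle dichotomy above.

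For case~C, we have $v,b$ with $vb\notin E(G)$, $G-\{v,b\}$ disconnected, and all but possibly one component $G[X]$ satisfying that $G[X\cup\{v\}]$ and $G[X\cup\{b\}]$ are cliques. Taking such a ``good'' component $C=G[X]$ and letting $Y$ absorb the rest, the hypotheses of Rule~\ref{rule:pincer} are nearly met; the only gap is the requirement that these cliques contain no positive triangle. If either clique does contain a positive triangle, I would again appeal to Lemma~\ref{notriangleincliques} (applied to that clique, whose complement in $G$ is connected) to conclude that Rule~\ref{rule:positivetriangle} or~\ref{rule:positivetriangle2} applies; otherwise Rule~\ref{rule:pincer} itself applies. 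I expect this interplay between the clique structure from Lemma~\ref{lem:slem} and the positive-triangle bookkeeping to be the main obstacle: one must carefully verify that whenever a clique required by Rules~\ref{rule:clique} or~\ref{rule:pincer} harbours a positive triangle, the connectivity of its complement is preserved so that Lemma~\ref{notriangleincliques} is genuinely applicable.

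Finally, for connectedness of $G'$, I would check each rule individually. Rules~\ref{rule:positivetriangle},~\ref{rule:pair},~\ref{rule:path} delete a set whose removal leaves $G$ connected by hypothesis, so $G'$ is connected. Rules~\ref{rule:positivetriangle2},~\ref{rule:clique},~\ref{rule:pincer} delete a component $C$ (and in some cases the separating vertices), leaving precisely the remaining connected piece $Y$ (or $G-v$, or $G''$), which is connected by construction. Rule~\ref{rule:pair2} is like Rule~\ref{rule:pair}. Thus in every case $G'$ is connected, completing the proof.
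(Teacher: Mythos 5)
Your overall strategy (case analysis on properties A--C of Lemma~\ref{lem:slem}, with Lemma~\ref{notriangleincliques} handling positive triangles) is the same as the paper's, and your treatment of case B and of the connectedness of $G'$ is fine, but the execution in cases A and C has genuine gaps. In case A you apply Lemma~\ref{notriangleincliques} to the clique $G[V(C)\cup\{v\}]$, asserting that $G-(V(C)\cup\{v\})$ is connected. That is false in general: $G-v$ may have several components besides $C$, and after deleting $v$ as well they fall apart (the complement may even be empty). This is not cosmetic, because it is exactly here that Rule~\ref{rule:pair2} is needed, and your argument never actually deploys that rule: if $X\cup\{v\}$ is a clique whose only positive triangles pass through $v$, and $G-v$ has two further components, then neither Rule~\ref{rule:positivetriangle} nor Rule~\ref{rule:positivetriangle2} applies to the triangle $vabv$ (removing it leaves at least three components, and the extra components need not be cliques), whereas Rule~\ref{rule:pair2} does. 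The paper avoids this by applying Lemma~\ref{notriangleincliques} only to the clique $X$ itself, whose complement is connected through $v$, and reserving Rule~\ref{rule:pair2} for triangles through $v$. A second problem in case A: your claim that ``if some vertex of $C$ is adjacent to $v$ and some is not, Rule~\ref{rule:pair} fires'' fails when $v$ has exactly one neighbour $x$ in $C$ and $|V(C)|\ge 3$, since then $G-\{x,b\}$ disconnects $V(C)\setminus\{x,b\}$ from the rest; the paper handles this by applying Rule~\ref{rule:clique} with $x$ as cutvertex and clique $G[V(C)\setminus\{x\}]$.

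In case C you let ``$Y$ absorb the rest,'' but Rule~\ref{rule:pincer} requires $G-\{v,b\}$ to have exactly two connected components; if there are three or more, your $Y$ is disconnected and the rule as stated does not apply. The paper treats this situation separately by applying Rule~\ref{rule:path} to a path $x_1bx_2$ with $x_1,x_2$ drawn from two different clique components. Also, your parenthetical claim that the complement of $G[X\cup\{v\}]$ in $G$ is connected requires the other component to be adjacent to both $v$ and $b$ (the complement is the other component plus the isolated-unless-attached vertex $b$); when it is adjacent to only one of them, the paper falls back to property A rather than invoking Lemma~\ref{notriangleincliques}. Each of these omissions corresponds to a configuration in which the rule you name simply does not apply, so the proof as written does not establish that some rule always fires.
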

\begin{proof}
It is not difficult to see that the graph $G'$ is connected, since when it is not obvious, its connectedness is part of the conditions
for the rule to apply.

If property A of Lemma \ref{lem:slem} holds, and $G[X]$ contains a positive triangle $abca$, then by Lemma \ref{notriangleincliques} either Rule \ref{rule:positivetriangle} or Rule \ref{rule:positivetriangle2} applies. If $2\le |N_G(v)\cap X|\le |X|-1$, then Rule \ref{rule:pair} applies. If $|N_G(v)\cap X|=|X|$ and there exist $a,b\in X$ such that $vabv$ is a positive triangle, Rule \ref{rule:pair2} applies; otherwise, $G[X\cup \{v\}]$ contains no positive triangles, and Rule \ref{rule:clique} applies. Finally, if $N_G(v)\cap X=\{x\}$, Rule \ref{rule:clique} applies for $x$ with clique $G[X\setminus \{x\}]$.

If property B of Lemma \ref{lem:slem} holds, then Rule \ref{rule:path} applies. If property C of Lemma \ref{lem:slem} holds, consider the case when $G-\{v,b\}$ has two connected components. Let $Z$ be the other connected component. If $Z$ is connected to only
one of $v$ or $b$, then property A holds. Otherwise, if $G[X\cup
\{x\}]$ contains a positive triangle, where $x\in \{v,b\}$, then by Lemma \ref{notriangleincliques}  either Rule \ref{rule:positivetriangle} or Rule \ref{rule:positivetriangle2} applies. So we may assume that $G[X\cup \{b,v\}]$ contains no positive
triangles, in which case Rule \ref{rule:pincer}  applies.

If $G-\{v,b\}$ has at least three connected components, at least two of them, $X_1,X_2$, form cliques with both $v$ and $b$ and possibly one component $Y$ does not. Assume without loss of generality that $Y$ has an edge to $v$. Then Rule \ref{rule:path} applies for the path $x_1bx_2$, where $x_1\in X_1$, $x_2\in X_2$.

\end{proof}

The following lemma gives structural results on $S$ and $G-S$. Note that from now on, $(G=(V,E),k)$ denotes the original instance of \smc\
and $(G'=(V',E'),k')$ denotes the instance obtained by applying Rules \ref{rule:first}-\ref{rule:last} exhaustively. The set $S\subseteq V$
denotes the set of vertices which are marked by the rules.

\begin{lemma}\label{lemma:forest}
Given a connected graph $G$, if we apply Rules \ref{rule:first}-\ref{rule:last} exhaustively, either
the set $S$ of marked vertices has cardinality at most $3k$, or $k'\leq 0$. In addition, $G-S$ is a forest of cliques that does not contain a positive triangle.
\end{lemma}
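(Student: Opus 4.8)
The plan is to prove the two assertions of Lemma~\ref{lemma:forest} separately, handling the bound on $|S|$ first and the structure of $G-S$ second. For the bound on $|S|$, I would track how each rule contributes to $S$ and to the parameter. Observe from the statements of Rules~\ref{rule:first}--\ref{rule:last} that every rule which marks vertices marks either two or three of them, and that the number of vertices marked is always at most three times the corresponding decrease in the parameter. Concretely, Rule~\ref{rule:positivetriangle} marks $3$ vertices and decreases $k$ by $3$; Rule~\ref{rule:positivetriangle2} marks $3$ and decreases by $2$; Rule~\ref{rule:pair} marks $2$ and decreases by $2$; Rule~\ref{rule:pair2} marks $2$ and decreases by $4$; Rule~\ref{rule:path} marks $3$ and decreases by $1$; and Rule~\ref{rule:pincer} marks $2$ and decreases by $1$. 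In every case the number of marked vertices is at most three times the drop in $k$. Rule~\ref{rule:clique} marks nothing and leaves $k$ unchanged, so it is irrelevant to this count. Summing over all applications, if $\Delta$ is the total decrease in the parameter then $|S|\le 3\Delta = 3(k-k')$. Hence if $k'>0$ we get $|S|\le 3(k-k') < 3k$, and if $k'\le 0$ the second alternative of the statement holds; this is exactly the dichotomy claimed.

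For the structural claim, I would argue that the rules are applied exhaustively, so by Lemma~\ref{lem:exhaust} the process terminates only when $G-S$ (the final graph $G'$, together with any deleted-but-unmarked pieces) has no edges left to which a rule applies. The cleaner way to phrase this: by Lemma~\ref{lem:exhaust}, as long as the current graph is connected and has an edge, some rule applies; the rules preserve connectedness, so the process continues until the graph produced is edgeless. I would then show that $G-S$ is a forest of cliques with no positive triangle by examining the pieces deleted at each step. Whenever a rule deletes a component $C$ (Rules~\ref{rule:positivetriangle2}, \ref{rule:clique}, \ref{rule:pincer}), the hypotheses guarantee that $C$, or $C$ together with the relevant cut vertex, is a clique containing no positive triangle. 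Thus every block of $G-S$ that is detached in this way is a clique with no positive triangle, and the deletions assemble $G-S$ as a disjoint union of such cliques glued along cut vertices, i.e.\ a forest of cliques. The absence of positive triangles is inherited directly from the side conditions of the clique-deleting rules (Rules~\ref{rule:clique} and \ref{rule:pincer} explicitly require it, and Rule~\ref{rule:positivetriangle2} deletes a $C$ that contains no positive triangle, as guaranteed in the proof of Lemma~\ref{notriangleincliques}).

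I would organise the structural part around the observation that $S$ is precisely the marked set, and that $G-S$ is exactly what remains after deleting both marked vertices and the unmarked clique-components $C$. The key point is that every vertex removed but \emph{not} placed in $S$ belongs to a clique-component $C$ of the form guaranteed by Rule~\ref{rule:clique}, \ref{rule:positivetriangle2}, or \ref{rule:pincer}, each of which is a clique with no positive triangle. Since a graph all of whose blocks are cliques is a forest of cliques, and since removing positive triangles is ensured component-by-component, the claim follows once one checks that the decomposition respects the block structure.

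The main obstacle I anticipate is the second, structural claim rather than the counting. The counting is a straightforward bookkeeping over the seven rules. The delicate part is verifying rigorously that $G-S$ is genuinely a forest of cliques: one must argue that the clique-components deleted throughout the process, when reassembled, produce a graph whose every block is a clique, and that no positive triangle can survive. This requires care because the rules are applied to a dynamically changing graph, so I would need to argue that a vertex or component identified as a clique at the moment of deletion remains correctly described relative to the original $G$ restricted to the surviving vertices, and that the cut-vertex gluings never create a non-clique block or reintroduce a positive triangle.
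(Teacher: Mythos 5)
Your proposal is correct and follows essentially the same route as the paper: the same bookkeeping ($|S|\le 3(k-k')$, since each marking rule marks at most three vertices and decreases the parameter by at least one), and the same structural argument that the only unmarked deleted pieces are cliques without positive triangles attached at cut vertices (Rules \ref{rule:positivetriangle2}, \ref{rule:clique}, \ref{rule:pincer}). The ``delicate part'' you flag about the dynamically changing graph is handled in the paper by running your reassembly argument as a backwards induction over the sequence of rule applications, checking for each rule that $G_1-(S\cap V(G_1))$ is obtained from $G_2-(S\cap V(G_2))$ by adding a clique with no positive triangle, either disjointly or identified at the cut vertex $v$.
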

\begin{proof}
Observe that for every reduction rule where some vertices are marked, at most 3 vertices are marked, and the parameter descreases by at least 1.
This means that if $k'>0$, then the reduction rules cannot
have marked more than $3k$ vertices.

To show that $G-S$ is a forest of cliques that does not contain a positive triangle, proceed by induction. It is trivially true that the empty graph and the graph
with only one vertex are forests of cliques that do not contain positive triangles. Now suppose that we apply one of the rules, transforming a graph $G_1$ into a graph $G_2$;
suppose in addition that $G_2-(S\cap V(G_2))$ is a forest of cliques that does not contain a positive triangle: we claim
that $G_1-(S\cap V(G_1))$ is a forest of cliques that does not contain a positive triangle, too.
In the case of Rules \ref{rule:positivetriangle}, \ref{rule:pair}, \ref{rule:pair2} and \ref{rule:path}, $G_1-(S\cap V(G_1))$ is equal to $G_2-(S\cap V(G_2))$,
therefore the claim is trivially true.
For Rule \ref{rule:clique}, note that $G_1-(S\cap V(G_1))$ is obtained from $G_2-(S\cap V(G_2))$ by either adding a disjoint
clique not containing a positive triangle if $v\in S$, or adding a clique not containing a positive triangle and identifying one of its vertices with $v$ (where $v$ is a cutvertex as in the description of
Rule \ref{rule:clique}).
Finally, for Rules \ref{rule:positivetriangle2} and \ref{rule:pincer}, $G_1-(S\cap V(G_1))$ is obtained from $G_2-(S\cap V(G_2))$ by
adding one disjoint clique not containing a positive triangle.
\end{proof}



Finally, it is possible to prove that \smc\ is FPT. First we state \textsc{Max-Cut-with-Weighted-Vertices} as in \cite{CroJonesMnich}.

\begin{center}
\fbox{~\begin{minipage}{0.8\textwidth}
\textsc{Max-Cut-with-Weighted-Vertices}

\emph{Instance}: A graph $G$ with weight functions
$w_1:~V(G)~\rightarrow~\mathbb{N}_0$ and
$w_2:~V(G)~\rightarrow~\mathbb{N}_0$, and an integer $t\in\mathbb N$.

\smallskip

\emph{Question}: Does there exist an assignment $f: V(G) \rightarrow \{1,2\}$
such that
$\sum_{xy \in E}|f(x)-f(y)| + \sum_{f(x)=1}w_1(x) + \sum_{f(x)=2}w_2(x)
\ge t$?
\end{minipage}~}
\end{center}

\begin{theorem}\label{thm:fpt}
\smc\ can be solved in time $O^*(8^k)$.
\end{theorem}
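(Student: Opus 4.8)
The plan is to use Rules \ref{rule:first}--\ref{rule:last} only as a device for producing the set $S$, and then solve the problem on the \emph{original} graph $G$ by brute force over $S$ combined with a polynomial-time routine on $G-S$. First I would apply Rules \ref{rule:first}--\ref{rule:last} exhaustively; by Lemma \ref{lem:exhaust} a rule always applies while an edge remains, each application deletes at least one vertex, so the whole process runs in polynomial time and yields a reduced instance $(G',k')$ together with the marked set $S$. By Lemma \ref{lemma:forest}, either $k'\le 0$ or $|S|\le 3k$, and in either case $G-S$ is a forest of cliques containing no positive triangle. If $k'\le 0$ then $\pt{G'}+\frac{k'}{4}\le\pt{G'}\le\beta(G')$, so $(G',k')$ is a {\sc Yes}-instance; since the rules are safe (Theorem \ref{lem-safe:positivetriangle}) I answer {\sc Yes}. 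Hence I may assume $k'>0$, so $|S|\le 3k$.

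The remaining task is to decide whether $\beta(G)\ge\pt{G}+\frac{k}{4}$ using the decomposition $V=S\cup(V\setminus S)$ with $G[V\setminus S]$ a forest of cliques (hence chordal) with no positive triangle. I would first apply Corollary \ref{cor:equiv} with $X=V\setminus S$ to obtain $W\subseteq V\setminus S$ such that $(G_W,k)$ is equivalent to $(G,k)$ and every internal edge of $V\setminus S$ in $G_W$ is negative; by Corollary \ref{cor:switch} we have $\beta(G_W)=\beta(G)$ and $\pt{G_W}=\pt{G}$, and since switching leaves the underlying graph unchanged, $G_W[V\setminus S]$ is still a forest of cliques. So I replace $G$ by $G_W$ and assume all internal edges of $V\setminus S$ are negative.

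The key step is guess-and-extend over $S$. A balanced subgraph of $G$ corresponds to a partition $(V_1,V_2)$ of $V$ together with its set of \emph{satisfied} edges (positive inside a part, negative across), and $\beta(G)$ is the maximum number of satisfied edges over all partitions of $V$. Since every partition of $V$ restricts to a partition of $S$, I enumerate all $2^{|S|}\le 2^{3k}=8^k$ partitions $(S_1,S_2)$ of $S$ and, for each, maximise over extensions $f\colon V\setminus S\to\{1,2\}$. For a fixed $(S_1,S_2)$ the satisfied edges split into three groups: the edges inside $S$, whose number $c(S_1,S_2)$ is a constant; the internal edges of $V\setminus S$, which are all negative and thus satisfied exactly when their endpoints are separated, contributing $\sum_{xy}|f(x)-f(y)|$; and the $S$-to-$(V\setminus S)$ edges, whose contribution is separable per vertex. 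Setting $w_1(x)=|E^+(x,S_1)|+|E^-(x,S_2)|$ and $w_2(x)=|E^+(x,S_2)|+|E^-(x,S_1)|$ for each $x\in V\setminus S$ turns the optimal extension into an instance of \textsc{Max-Cut-with-Weighted-Vertices} on the forest of cliques $G_W[V\setminus S]$ with threshold $t=\lceil\pt{G}+\frac{k}{4}\rceil-c(S_1,S_2)$; I answer {\sc Yes} iff some guess is feasible.

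Finally I would invoke a polynomial-time algorithm for \textsc{Max-Cut-with-Weighted-Vertices} on a forest of cliques, obtained as a slight modification of the dynamic program of Crowston \emph{et al.}~\cite{CroJonesMnich}: processing the blocks bottom-up along the block-tree, for each clique it suffices to track the placement of its cut-vertex and the number of vertices assigned to each side (the cut of a clique depends only on these sizes, and the separable vertex weights are then handled optimally). Each of the $8^k$ guesses is resolved in polynomial time, giving total running time $O^*(8^k)$. I expect the main obstacle to be the correctness of the encoding rather than the running-time bookkeeping: one must verify that after switching the three edge groups decompose exactly as claimed, that $w_1(x)$ and $w_2(x)$ capture the full interaction of $x$ with $S$ under both placements, and that the dynamic program indeed optimises weighted max cut on a forest of cliques; the safeness of the rules and the enumeration over $S$ then assemble into the stated bound.
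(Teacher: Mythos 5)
Your proposal is correct and follows essentially the same route as the paper: exhaust Rules \ref{rule:first}--\ref{rule:last} to obtain $S$ with $|S|\le 3k$ (or answer {\sc Yes} if $k'\le 0$), switch via Corollary \ref{cor:equiv} so that $G-S$ has only negative edges, enumerate the $2^{|S|}\le 8^k$ partitions of $S$, and reduce each extension problem to \textsc{Max-Cut-with-Weighted-Vertices} on the forest of cliques $G-S$ with exactly the vertex weights the paper uses. The only difference is that you sketch the block-tree dynamic program explicitly, whereas the paper simply cites Lemma 9 of \cite{CroJonesMnich} for the polynomial-time solvability on forests of cliques.
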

\begin{proof}
Let $(G=(V,E),k)$ be an instance of \smc. Apply Rules \ref{rule:first}-\ref{rule:last} exhaustively, producing an instance $(G'=(V',E'),k')$
and a set $S\subseteq V$ of marked vertices. If $k'\leq 0$, $(G',k')$ is a trivial {\sc Yes}-instance. Since the rules are safe, it follows that
$(G,k)$ is a {\sc Yes}-instance, too.

Otherwise, $k'>0$. Note that by Lemma \ref{lemma:forest}, $|S|\le 3k$ and $G-S$ is a forest of cliques, which is
a chordal graph without positive triangles. Hence, by Corollary \ref{cor:equiv},
we may assume that $G-S$ does not contain positive edges.

Therefore, to solve \smc\ on $G$, we can guess a balanced subgraph of $G[S]$, induced by a partition $(V_1,V_2)$, and then solve 
{\sc Max-Cut-with-Weighted-Vertices} for $G-S$. The weight of a vertex $v\in V(G-S)$ is defined
in the following way: let $n_i^+(v)$ be the number of positive neighbors of $v$ in $V_i$ and $n_i^-(v)$ be the number of negative neighbors
of $v$ in $V_i$; then $w_1(v)=n_1^+(v)+n_2^-(v)$ and $w_2(v)=n_2^+(v)+n_1^-(v)$.

Since {\sc Max-Cut-with-Weighted-Vertices} is solvable in polynomial time on a forest of cliques (see Lemma 9 in \cite{CroJonesMnich}) and
the number of possible partitions of $S$ is bounded by $2^{3k}$, this gives an $O^*(8^k)$-algorithm to solve \smc.
\end{proof}

\section{Kernelization}
In this section, we show that \smc\ admits a kernel with $O(k^3)$ vertices. The proof of Theorem \ref{thm:fpt} implies the following key result for our kernelization.

\begin{corollary}
Let $(G=(V,E),k)$ be an instance of \smc. In polynomial time, either we can conclude that $(G,k)$ is a {\sc Yes}-instance or we can find a set $S$ of at most $3k$ vertices for which we may assume that $G-S$ is a forest of cliques without positive edges. 
\end{corollary}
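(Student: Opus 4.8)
The plan is to derive this corollary directly from the machinery assembled in Section~\ref{sec:fpt}, since it is essentially a repackaging of the proof of Theorem~\ref{thm:fpt} stated as a standalone structural result. First I would invoke Lemma~\ref{lem:exhaust}, which guarantees that as long as the current graph is connected and has at least one edge, one of Rules~\ref{rule:first}--\ref{rule:last} applies, and that the resulting graph $G'$ is again connected. This justifies applying the rules exhaustively: starting from the connected input $G$, we obtain a (polynomial-length) sequence of reductions terminating in an edgeless graph, producing a reduced instance $(G',k')$ and a marked set $S\subseteq V$. Each rule deletes a bounded number of vertices and can be carried out in polynomial time, and the total number of applications is bounded by $|V|$, so the whole process runs in polynomial time.

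Next I would split into the two cases according to the value of $k'$, exactly mirroring the proof of Theorem~\ref{thm:fpt}. By Lemma~\ref{lemma:forest}, either $|S|\le 3k$ or $k'\le 0$. In the case $k'\le 0$, the reduced instance $(G',k')$ is a trivial \textsc{Yes}-instance (the Poljak-Turz\'ik bound of Theorem~\ref{thm:bal}/Corollary~\ref{cor:switch} already certifies a balanced subgraph of the required size), and since Theorem~\ref{lem-safe:positivetriangle} establishes that Rules~\ref{rule:first}--\ref{rule:last} are safe, i.e.\ they never turn a \textsc{No}-instance into a \textsc{Yes}-instance, we conclude that $(G,k)$ itself is a \textsc{Yes}-instance. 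This is the ``conclude \textsc{Yes}'' branch of the statement.

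In the remaining case $k'>0$, Lemma~\ref{lemma:forest} gives $|S|\le 3k$ and tells us that $G-S$ is a forest of cliques containing no positive triangle. A forest of cliques is chordal, so $G-S$ is a chordal graph without positive triangles, and Corollary~\ref{cor:equiv} then supplies a switch set $W\subseteq V(G-S)$ for which the equivalent instance $(G_W,k)$ has $G_W[V\setminus S]$ free of positive edges. Replacing $G$ by $G_W$ (which leaves $S$ and all relevant quantities unchanged by Corollary~\ref{cor:switch}), we may assume $G-S$ is a forest of cliques without positive edges, as required.

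The only mild subtlety — and the step I would be most careful about — is the phrase ``for which we may assume.'' The switching of Corollary~\ref{cor:equiv} changes the signs on edges incident to $S$ as well, so I must note explicitly that this only passes to an equivalent instance (via Corollary~\ref{cor:switch}) rather than the original $G$, and that $S$ retains both its size bound and the property that $G-S$ is a forest of cliques, since switching alters no adjacencies and preserves the underlying graph. With that observation recorded, the corollary follows with no further computation; everything else is a direct citation of the lemmas already proved.
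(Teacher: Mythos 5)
Your proof is correct and follows essentially the same route as the paper, which presents this corollary as a direct consequence of the proof of Theorem~\ref{thm:fpt}: exhaustive application of Rules~\ref{rule:first}--\ref{rule:last} (justified by Lemma~\ref{lem:exhaust}), the safety of the rules, Lemma~\ref{lemma:forest} for the $|S|\le 3k$ versus $k'\le 0$ dichotomy, and the switch of Corollary~\ref{cor:equiv} to eliminate positive edges in $G-S$. The only blemish is a mis-citation: the claim that $k'\le 0$ yields a trivial \textsc{Yes}-instance rests on the Poljak--Turz\'ik bound $\beta(G')\ge\pt{G'}$ from \cite{PoljakTurzik86}, not on Theorem~\ref{thm:bal} or Corollary~\ref{cor:switch}.
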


The kernel is obtained via the application of a new set of reduction rules and using structural results that bound the size of {\sc No}-instances $(G,k)$. First, we need some additional terminology.
For a block $C$ in $G-S$, let $\interior{C}=\{x\in V(C):N_{G-S}(x)\subseteq V(C)\}$ be the {\em interior} of $C$, and let $\exterior{C}=V(C)\setminus \interior{C}$ be the {\em exterior} of $C$.
If a block $C$ is such that $\interior{C}\cap N_G(S)\neq \emptyset$, $C$ is a {\em special} block. We say a block $C$ is a {\em path block} if $|V(C)|=2=|\exterior{C}|$. A {\em path vertex} is a vertex which is contained only in path blocks. A block $C$ in $G-S$ is a {\em leaf block} if $|\exterior{C}|\leq 1$.

The following reduction rules are two-way reduction rules: they apply to an instance $(G,k)$ and produce an equivalent instance $(G',k')$.

\begin{rrule}\label{kerRuleRemove2}\label{kerRuleFirst}
Let $C$ be a block in $G-S$. If there exists $X \subseteq \interior{C}$ such that $|X| > \frac{|V(C)|+|N_G(X)\cap S|}{2}\ge 1$, $N_G^+(x)\cap S=N_G^+(X)\cap S$ and $N_G^-(x)\cap S=N_G^-(X)\cap S$ for all $x\in X$, then delete two arbitrary vertices $x_1,x_2\in X$ and set $k'=k$.
\end{rrule}

\begin{rrule}\label{kerRuleRemove1}
Let $C$ be a block in $G-S$. If $|V(C)|$ is even and there exists $X \subseteq \interior{C}$ such that $|X|=\frac{|V(C)|}{2}$ and $N_G(X)\cap S=\emptyset$, then delete a vertex $x\in X$ and set $k'=k-1$.
\end{rrule}

\begin{rrule}\label{kerRsize3see2}
Let $C$ be a block in $G-S$ with vertex set $\{x,y,u\},$ such that $N_G(u)=\{x,y\}$.
If the edge $xy$ is a bridge in $G-\{u\}$, delete $C$, add a new vertex $z$, positive edges 
$\{ zv : v\in N_{G-u}^+(\{x,y\}) \}$, negative edges $\{ zv : v\in N_{G-u}^-(\{x,y\}) \}$ and set $k'=k$.
Otherwise, delete $u$ and the edge $xy$ and set $k'=k-1$.
\end{rrule}

\begin{rrule}\label{kerRuleOneinS}\label{kerRuleLast}
Let $T$ be a connected component of $G-S$ only adjacent to a vertex $s\in S$. Form a
{\sc Max-Cut-with-Weighted-Vertices} instance on $T$ by defining $w_1(x)=1$ if $x\in N_G^+(s)\cap T$ ($w_1(x)=0$ otherwise) and $w_2(y)=1$ if $y\in N_G^-(s)\cap T$ ($w_2(y)=0$ otherwise). Let $\beta(G[V(T)\cup\{s\}])=\pt{G[V(T)\cup\{s\}]}+\frac{p}{4}$. Then delete $T$ and set $k'=k-p$. \end{rrule}

Note that the value of $p$ in Rule \ref{kerRuleLast} can be found in polynomial time by solving {\sc Max-Cut-with-Weighted-Vertices} on $T$.

A two-way reduction rule is {\em valid} if it transforms {\sc Yes}-instances into {\sc Yes}-instances and {\sc No}-instances into {\sc No}-instances.
Theorem \ref{lem:kerRuleRemove2valid} shows that Rules \ref{kerRuleFirst}-\ref{kerRuleLast} are valid.  To prove Theorem \ref{lem:kerRuleRemove2valid}, we need the following two lemmas.

\begin{lemma}\label{lem:partsizes}
Let $C$ be a block in $G-S$. If there exists $X\subseteq\interior{C}$ such that $|X|\geq\frac{|V(C)|}{2}$, then there exists a $(V_1,V_2)$-balanced subgraph $H$ of $G$ with $\beta(G)$ edges
such that at least one of the following inequalities holds:
\begin{itemize}
\item $|V_2\cap V(C)|\leq |V_1\cap V(C)|\leq |N_G(X)\cap S|+|V_2\cap V(C)|$;
\item $|V_2\cap V(C)|\leq |V_1\cap V(C)|\leq |V_2\cap V(C)|+1$.
\end{itemize}
\end{lemma}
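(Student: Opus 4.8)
The plan is to exploit the fact that, since $G-S$ is a forest of cliques without positive edges (so by Corollary \ref{cor:equiv} we may assume all edges of $C$ are negative), the block $C$ is a clique all of whose internal edges are negative. Consequently, for any partition $(V_1,V_2)$ of $V$, the number of edges of $C$ retained by a $(V_1,V_2)$-balanced subgraph $H$ is exactly $|V_1\cap V(C)|\cdot|V_2\cap V(C)|$, since a negative edge of $C$ survives in $H$ precisely when it crosses the partition. Writing $a=|V_1\cap V(C)|$ and $b=|V_2\cap V(C)|$, this product is largest when $|a-b|$ is smallest. I would therefore take $H$ to be a balanced subgraph with $\beta(G)$ edges, labelled so that $a\ge b$, and chosen to minimize the imbalance $a-b$ among all such maximum subgraphs; the claim is that one of the two stated inequalities holds for this $H$.

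If $a-b\le 1$ the second inequality holds immediately, so assume $a-b\ge 2$. The key move is to relocate a single interior vertex $x\in V_1\cap X$ to $V_2$ and track the change in $|E(H)|$. First I would verify that $V_1\cap X\neq\emptyset$: since $|X|\ge |V(C)|/2=(a+b)/2$ and $|X\cap V_2|\le b$, we get $|X\cap V_1|\ge (a-b)/2\ge 1$. Because $x\in\interior{C}$, its only neighbours outside $C$ lie in $S$, so switching $x$ affects only (i) the internal edges of $C$ and (ii) the edges from $x$ to $S$. A direct count shows the internal contribution changes by $(a-1)-b=a-b-1$, while the $S$-contribution changes by $d_x-g_x$, where $g_x$ and $d_x$ denote the numbers of edges from $x$ to $S$ that do, respectively do not, belong to $H$ before the move (each such edge swaps its membership status when $x$ changes sides).

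By the minimality of $a-b$, this move cannot yield another maximum balanced subgraph: the new imbalance would be $a-b-2$, still nonnegative (so $V_1$ stays the larger side) but strictly smaller. Hence the move strictly decreases the edge count, that is $(a-b-1)+(d_x-g_x)<0$, which forces $g_x\ge a-b$ and in particular $|N_G(x)\cap S|\ge g_x\ge a-b$. Choosing any $x\in V_1\cap X$ and using $N_G(x)\cap S\subseteq N_G(X)\cap S$ then gives $|N_G(X)\cap S|\ge a-b$, i.e.\ $a\le b+|N_G(X)\cap S|$, which is the first inequality.

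The main obstacle is making the single-vertex move clean, and this is exactly where the hypothesis $X\subseteq\interior{C}$ is essential: it guarantees that $x$ has no neighbours in $G-S$ outside $C$, so the only terms in the edge-count change are the internal-$C$ term and the $x$-to-$S$ term, letting the latter be controlled purely through $N_G(x)\cap S$. Everything else is bookkeeping: confirming $V_1\cap X\neq\emptyset$, checking that the move preserves $a\ge b$, and arguing that a strictly smaller imbalance is impossible among maximum balanced subgraphs — all of which follow from the counting above rather than from any deeper structural input.
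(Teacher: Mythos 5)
Your proof is correct and follows essentially the same route as the paper's: take a maximum balanced subgraph minimising the imbalance $|V_1\cap V(C)|-|V_2\cap V(C)|$, note that $|X|\ge|V(C)|/2$ forces $X\cap V_1\neq\emptyset$ when the imbalance is at least $2$, and move one vertex of $X\cap V_1$ across, using minimality to bound the imbalance by $|N_G(X)\cap S|$. The only cosmetic difference is that the paper treats $N_G(X)\cap S=\emptyset$ as a separate case while you absorb it into the same exchange inequality.
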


\begin{proof}
We may assume that $|V_1\cap V(C)|\ge |V_2\cap V(C)|$. Note that if $|V_1\cap V(C)|>|V_2\cap V(C)|$, then $X\cap V_1\neq\emptyset$ (because $|X|\geq\frac{|V(C)|}{2}$).

First, if $N_G(X)\cap S=\emptyset$ and $|V_1\cap V(C)|\ge |V_2\cap V(C)|+2$, then, for any $x\in X\cap V_1$, the subgraph induced by the partition $(V_1\setminus\{x\},V_2\cup\{x\})$ has more edges than the subgraph induced by $(V_1,V_2)$, which is a contradiction.

Now, suppose that $N_G(X)\cap S\neq\emptyset$ and suppose also that $|V_1\cap V(C)|-|V_2\cap V(C)|$ is minimal. If $|V_1\cap V(C)|\leq |V_2\cap V(C)|+1$ we are done, so suppose $|V_1\cap V(C)|\geq |V_2\cap V(C)|+2$. Consider the partition $V_1'=V_1\setminus \{x\}$, $V_2'=V_2 \cup \{ x\}$, where $x\in V_1 \cap X$, and the balanced subgraph $H'$ induced by this partition. Then $|E(H')|\geq |E(H)|+|E(V_1\setminus \{x\},x)|-|E(V_2,x)|\geq |E(H)|+(|V_1\cap V(C)|-1-|N_G(X)\cap S|-|V_2\cap V(C)|)$.
Since $|V_1'\cap V(C)|-|V_2'\cap V(C)|<|V_1\cap V(C)|-|V_2\cap V(C)|$, it holds that $|E(H')|\leq |E(H)|-1$. Therefore, $|V_1\cap V(C)|\leq |N_G(X)\cap S|+|V_2\cap V(C)|$.
\end{proof}

%
%

\begin{lemma}\label{lem:xsplit}
Let $C$ be a block in $G-S$. If there exists $X\subseteq\interior{C}$ such that $|X| > \frac{|V(C)|+|N_G(X)\cap S|}{2}$, $N_G^+(x)\cap S=N_G^+(X)\cap S$ and $N_G^-(x)\cap S=N_G^-(X)\cap S$ for all $x\in X$, then, for any $x_1,x_2\in X$, there exists a $(V_1,V_2)$-balanced subgraph $H$ of $G$ with $\beta(G)$ edges such that $x_1\in V_1$ and $x_2\in V_2$.
\end{lemma}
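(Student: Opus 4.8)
\textbf{Proof plan for Lemma \ref{lem:xsplit}.}
The plan is to start from a maximum balanced subgraph $H$ of $G$ and massage it, without decreasing the number of edges, until the two designated vertices $x_1,x_2$ land on opposite sides of the partition. First I would invoke Lemma \ref{lem:partsizes}: since $|X|>\frac{|V(C)|+|N_G(X)\cap S|}{2}\ge\frac{|V(C)|}{2}$, the hypothesis $|X|\ge\frac{|V(C)|}{2}$ is met, so there is a $(V_1,V_2)$-balanced subgraph $H$ with $\beta(G)$ edges satisfying one of the two size inequalities. The strict inequality $|X|>\frac{|V(C)|+|N_G(X)\cap S|}{2}$ is exactly what forces $X$ to meet \emph{both} sides: from each inequality one derives $|V_1\cap V(C)|-|V_2\cap V(C)|\le\max\{|N_G(X)\cap S|,1\}$, and combined with $|X|$ being strictly more than half of $|V(C)|+|N_G(X)\cap S|$, a counting argument shows $X\cap V_1\neq\emptyset$ and $X\cap V_2\neq\emptyset$. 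So I may pick $a\in X\cap V_1$ and $b\in X\cap V_2$.

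The second ingredient is that all vertices of $X$ are interchangeable with respect to edges leaving $C$: by hypothesis every $x\in X$ has the same positive neighbourhood $N_G^+(X)\cap S$ and the same negative neighbourhood $N_G^-(X)\cap S$ in $S$, and by the definition of $\interior{C}$ no vertex of $X$ has neighbours outside $V(C)\cup S$. Inside the block $C$ (a clique, after the switch of Corollary \ref{cor:equiv} consisting only of negative edges), any two vertices of $X$ also have identical adjacency to the rest of $C$. Consequently, swapping two vertices of $X$ that lie on the same side, or moving vertices of $X$ between sides in a balanced way, does not change $|E(H)|$.

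With $a\in X\cap V_1$, $b\in X\cap V_2$ in hand, I would finish by a relabelling argument. If $\{x_1,x_2\}=\{a,b\}$ we are done. Otherwise I would transplant $x_1$ to the side of $a$ and $x_2$ to the side of $b$ by exchanging names within $X$: since all elements of $X$ have the same external neighbourhoods in $S$ and (being in the same clique block with only negative edges) the same behaviour towards $V(C)\setminus X$, the partition obtained by putting $x_1$ where $a$ was and $x_2$ where $b$ was, and leaving everything else fixed, induces a balanced subgraph with exactly the same number of edges, namely $\beta(G)$. This yields a $(V_1,V_2)$-balanced subgraph of maximum size with $x_1\in V_1$ and $x_2\in V_2$.

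\textbf{Main obstacle.} The delicate step is the first one: deducing from the \emph{strict} inequality that $X$ genuinely straddles the partition, i.e. that both $X\cap V_1$ and $X\cap V_2$ are nonempty. In the branch where Lemma \ref{lem:partsizes} only gives $|V_1\cap V(C)|\le|V_2\cap V(C)|+1$ this is immediate, but in the branch $|V_1\cap V(C)|\le|N_G(X)\cap S|+|V_2\cap V(C)|$ one must combine the bound on the imbalance of $C$ with $|X|>\frac{|V(C)|+|N_G(X)\cap S|}{2}$ to rule out $X\subseteq V_1$; this is where the exact form of the strict hypothesis is used, and it is worth checking the arithmetic carefully rather than treating it as routine.
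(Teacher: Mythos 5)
Your overall strategy (apply Lemma \ref{lem:partsizes}, show $X$ straddles the partition, then swap within $X$ using the identical neighbourhoods in $S$) is the same as the paper's, and your final relabelling step is exactly the paper's concluding argument. But the first step contains a genuine gap, and your ``Main obstacle'' paragraph locates the difficulty in the wrong branch. You claim that a counting argument alone yields $X\cap V_1\neq\emptyset$ and $X\cap V_2\neq\emptyset$, and that this is ``immediate'' in the branch $|V_1\cap V(C)|\le|V_2\cap V(C)|+1$. That branch is precisely where counting fails: it is the case $N_G(X)\cap S=\emptyset$, where the hypothesis only gives $|X|>\frac{|V(C)|}{2}$, i.e.\ $|X|\ge\frac{|V(C)|+1}{2}$ when $|V(C)|$ is odd, while Lemma \ref{lem:partsizes} allows $|V_1\cap V(C)|=\frac{|V(C)|+1}{2}$. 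So one can have $X=V_1\cap V(C)$ with $X\cap V_2=\emptyset$ (already for $|V(C)|=3$ and $|X|=2$). No counting rules this out; instead one must exhibit a \emph{different} maximum balanced subgraph, by moving a single $x\in X\cap V_1$ to $V_2$ and checking that the edge count is unchanged: $|E(H')|=|E(H)|-|E(x,V_2)|+|E(x,V_1\setminus\{x\})|=|E(H)|-|V_2\cap V(C)|+|V_1\cap V(C)|-1=|E(H)|$, which uses that $x\in\interior{C}$, that $N_G(X)\cap S=\emptyset$, and that $C$ is a clique all of whose edges are negative. Your phrase ``moving vertices of $X$ between sides in a balanced way does not change $|E(H)|$'' does not cover this: the required move is a one-sided transfer, which preserves the edge count only in this specific configuration, not in general.

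Conversely, the branch you flag as delicate ($|V_1\cap V(C)|\le|N_G(X)\cap S|+|V_2\cap V(C)|$, i.e.\ $N_G(X)\cap S\neq\emptyset$) is the one where pure counting does work: writing $|V(C)|=2|V_2\cap V(C)|+t$ with $t\le|N_G(X)\cap S|$, the assumption $X\subseteq V_1\cap V(C)$ gives $|V_1\cap V(C)|\ge|X|>\frac{|V(C)|+|N_G(X)\cap S|}{2}\ge|V_2\cap V(C)|+t=|V_1\cap V(C)|$, a contradiction. So the arithmetic you defer to ``careful checking'' is routine, while the step you treat as routine requires the extra construction above. Once that repair is added, the rest of your plan goes through as in the paper.
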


\begin{proof}
First, we claim that there exist vertices $x_1,x_2\in X$ for which the result holds. Let $H$ be a $(V_1,V_2)$-balanced subgraph of $G$ with $\beta(G)$ edges as given by Lemma \ref{lem:partsizes}.

Suppose $N_G(X)\cap S=\emptyset$. Then, by Lemma \ref{lem:partsizes} it holds that $|V_2\cap V(C)|\leq |V_1\cap V(C)|\leq |V_2\cap V(C)|+1$; in addition, $|X|>\frac{|V(C)|}{2}$. Hence, either we can find $x_1$ and $x_2$ as required, or $X=V_1\cap V(C)$ and $|V_1\cap V(C)|=|V_2\cap V(C)|+1$. In the second case, pick a vertex $x\in V_1$ and form the partition $V_1'=V_1\setminus\{x\}$ and $V_2'=V_2\cup\{x\}$. Consider the balanced subgraph $H'$ induced by this partition. Observe that $|E(H')|=|E(H)|-|E(x,V_2)|+|E(x,V_1\setminus\{x\})|=|E(H)|-|V_2\cap V(C)|+|V_1\cap V(C)|-1=|E(H)|$, so $H'$ is a maximum balanced subgraph for which we can find $x_1$ and $x_2$ as required.

Now, suppose $N_G(X)\cap S\neq\emptyset$. Then by Lemma \ref{lem:partsizes} it holds that $|V_2\cap V(C)|\leq |V_1\cap V(C)|\leq |N_G(X)\cap S|+|V_2\cap V(C)|$. For the sake of contradiction, suppose $X\subseteq V_1\cap V(C)$ or $X\subseteq V_2\cap V(C)$: in both cases, this means that $|X|\leq |V_1\cap V(C)|$. Note that $|V(C)|=|V_1\cap V(C)|+|V_2\cap V(C)|=2|V_2\cap V(C)|+t$, where $t\leq |N_G(X)\cap S|$. Hence, $|V_1\cap V(C)|\geq |X|>\frac{|V(C)|+|N_G(X)\cap S|}{2}=|V_2\cap V(C)|+\frac{t}{2}+\frac{|N_G(X)\cap S|}{2}\geq |V_2\cap V(C)|+t=|V_1\cap V(C)|$, which is a contradiction.

To conclude the proof, notice that for a $(V_1,V_2)$-balanced subgraph $H$ of $G$ with $\beta(G)$ edges and vertices $x_1,x_2\in X$ such that $x_1\in V_1$ and $x_2\in V_2$, we have  $|E(H)|=|E(H')|$, where $H'$ is a balanced subgraph induced by $V_1'=V_1\setminus\{x_1\}\cup\{x_2\}$ and $V_2'=V_2\setminus\{x_2\}\cup\{x_1\}$: this is true because $N_G^+(x_1)\cap S=N_G^+(x_2)\cap S$ and $N_G^-(x_1)\cap S=N_G^-(x_2)\cap S$.
\end{proof}

\begin{theorem}\label{lem:kerRuleRemove2valid}
Rules \ref{kerRuleRemove2}-\ref{kerRuleOneinS} are valid.
\end{theorem}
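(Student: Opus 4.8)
The plan is to prove the validity of each of the three rules separately, in each case establishing that the transformation preserves both the existence and non-existence of a balanced subgraph exceeding the Poljak--Turz\'ik bound by $\frac{k}{4}$. For every rule I would argue the two directions: that a good solution on $(G',k')$ lifts to a good solution on $(G,k)$, and conversely that a good solution on $(G,k)$ projects to a good solution on $(G',k')$. The key technical machinery is already in place: Lemma \ref{lem:xsplit} lets me assume that two prescribed vertices of the set $X$ sit on opposite sides of an optimal partition, and Lemma \ref{lem:partsizes} controls how unbalanced the two sides of a block can be in an optimal solution.

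For Rule \ref{kerRuleRemove2}, the idea is that the hypothesis $|X| > \frac{|V(C)|+|N_G(X)\cap S|}{2}$ is exactly the condition of Lemma \ref{lem:xsplit}, so there is a $\beta(G)$-edge balanced subgraph $H$ with $x_1\in V_1$ and $x_2\in V_2$ for the two vertices $x_1,x_2$ we wish to delete. Because $x_1$ and $x_2$ have identical positive and negative neighbourhoods into $S$ and lie in the same clique $C$ (so each is adjacent to everything else of $C$), deleting both of them removes a controlled, sign-symmetric number of edges; the contributions of $x_1$ and $x_2$ to the cut across $(V_1,V_2)$ should cancel against their contributions to $\pt{\cdot}$, so that $\beta(G)-\beta(G')$ matches $\pt{G}-\pt{G'}$ exactly, giving $k'=k$. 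The converse direction takes any good partition of $G'$ and re-inserts $x_1,x_2$ on opposite sides, recovering the same surplus. For Rule \ref{kerRuleRemove1}, the situation is cleaner because $N_G(X)\cap S=\emptyset$: here $X$ is an exact half of the block with no edges to $S$, so an optimal balanced subgraph can be taken with $X$ entirely on one side (the ``$|V_1\cap V(C)|\le|V_2\cap V(C)|+1$'' branch of Lemma \ref{lem:partsizes}), and deleting one vertex of $X$ decreases $\beta$ by exactly its degree while decreasing $\pt{\cdot}$ by a matching amount plus the promised $\frac14$, forcing $k'=k-1$.

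Rule \ref{kerRsize3see2} is the most delicate because it splits into two cases depending on whether $xy$ is a bridge in $G-\{u\}$. In the bridge case, the degree-two vertex $u$ together with the edge $xy$ forms a triangle block that is contracted into a single new vertex $z$ inheriting the combined signed neighbourhood of $\{x,y\}$; I would verify by a direct Poljak--Turz\'ik bookkeeping (using Lemma \ref{cutvertex}-style additivity across the cutvertex structure) that replacing this triangle by $z$ preserves $\beta-\pt{\cdot}$ with $k'=k$. In the non-bridge case, deleting $u$ and the edge $xy$ is shown to cost exactly one quarter-unit of slack, giving $k'=k-1$; here the balanced-subgraph count on the triangle $xyu$ must be compared against the reduced graph, and Corollary \ref{cor:equiv} lets me normalize the signs of the block to make the edge-counting transparent.

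The main obstacle I expect is the exact arithmetic matching of $\beta$ and $\pt{\cdot}$ in each case, rather than any conceptual difficulty: the rules are engineered so that the slack over the Poljak--Turz\'ik bound changes by precisely $k-k'$, but verifying this requires carefully tracking how many edges incident to the deleted vertices cross the cut in a \emph{maximum} balanced subgraph, and simultaneously how the $\frac{|V|-t}{4}$ term of $\pt{\cdot}$ shifts when deletions change the number of components or vertices. In particular, Rule \ref{kerRsize3see2}'s bridge/non-bridge dichotomy is where the component count of $\pt{\cdot}$ interacts nontrivially with the edge removal, and that case split is the one I would write out in the most detail. Throughout, the sign-normalization afforded by Corollary \ref{cor:equiv} and the partition-control lemmas reduce each verification to a finite edge count, so the proof should be a sequence of such bookkeeping arguments, one per rule and one per case.
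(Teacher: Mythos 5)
Your proposal covers only three rules, but the theorem asserts the validity of \emph{four}: Rules \ref{kerRuleRemove2}, \ref{kerRuleRemove1}, \ref{kerRsize3see2} \emph{and} \ref{kerRuleOneinS}. You never address Rule \ref{kerRuleOneinS}, which deletes a whole component $T$ of $G-S$ attached to a single vertex $s\in S$ and decreases $k$ by the surplus $p$ computed by solving {\sc Max-Cut-with-Weighted-Vertices} on $T$. Its validity is not a consequence of the block-level machinery you invoke (Lemmas \ref{lem:partsizes} and \ref{lem:xsplit}); it follows from the cutvertex additivity of Lemma \ref{cutvertex}, namely $\beta(G)=\beta(G-T)+\beta(G[V(T)\cup\{s\}])$ and $\pt{G}=\pt{G-T}+\pt{G[V(T)\cup\{s\}]}$, which together give $\beta(G)-\pt{G}=\beta(G-T)-\pt{G-T}+\frac{p}{4}$ and hence $k'=k-p$. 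This case is missing entirely and must be added.

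For the three rules you do treat, your route is essentially the paper's (Lemma \ref{lem:xsplit} for Rule \ref{kerRuleRemove2}, Lemma \ref{lem:partsizes} for Rule \ref{kerRuleRemove1}, the bridge/non-bridge case split for Rule \ref{kerRsize3see2}), but two details in your sketch of Rule \ref{kerRuleRemove1} are off. First, what Lemma \ref{lem:partsizes} buys you when $N_G(X)\cap S=\emptyset$ and $|V(C)|$ is even is that some maximum balanced subgraph splits $C$ exactly evenly, $|V_1\cap V(C)|=|V_2\cap V(C)|$; it does not (and need not) place $X$ entirely on one side. Second, deleting $x\in X$ decreases $\beta$ by $\frac{|V(C)|}{2}$, the number of its neighbours on the opposite side of the even split, not by its degree $|V(C)|-1$; the quarter-unit of lost slack comes from comparing $\frac{|V(C)|}{2}$ with the drop $\frac{|V(C)|-1}{2}+\frac{1}{4}$ in $\pt{\cdot}$. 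With the degree in place of $\frac{|V(C)|}{2}$ the arithmetic does not close, so this needs correcting when you write out the bookkeeping you promise.
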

\begin{proof}

\noindent{\bf Rule \ref{kerRuleRemove2}:} Let $C,X$ be as in the description of Rule \ref{kerRuleRemove2}. Let $x_1,x_2\in X$. By Lemma \ref{lem:xsplit}, there exists a $(V_1,V_2)$-balanced subgraph $H$ of $G$ with $\beta(G)$ edges such that $x_1\in V_1$ and $x_2\in V_2$. Now, let $G'=G-\{x_1,x_2\}$ and $H'=H-\{x_1,x_2\}$. Since $N_G^+(x_1)\cap S=N_G^+(x_2)\cap S$ and $N_G^-(x_1)\cap S=N_G^-(x_2)\cap S$, it holds that $|E(H)|=|E(H')|+\frac{|E(G,\{x_1,x_2\})|}{2}+1$, and so $\beta(G')+\frac{|E(G,\{x_1,x_2\})|}{2}+1\ge \beta(G)$. Conversely, by Lemma \ref{half}, $\beta(G)\ge \beta(G')+\frac{|E(G,\{x_1,x_2\})|}{2}+1$.
Finally, observe that $\pt{G}=\pt{G'}+\frac{|E(G,\{x_1,x_2\})|}{2}+1$, which implies that $\beta(G)-\pt{G}=\beta(G')-\pt{G'}$. Hence, $G$ admits a balanced subgraph of size $\pt{G}+\frac{k}{4}$ if and only if $G'$ admits a balanced subgraph of size $\pt{G'}+\frac{k}{4}$.

\smallskip

\noindent{\bf Rule \ref{kerRuleRemove1}:} Let $C,X$ and $x\in X$ be as in the description of Rule \ref{kerRuleRemove1}. By Lemma \ref{lem:partsizes}, there exists a $(V_1,V_2)$-balanced subgraph $H$ of $G$ with $\beta(G)$ edges, such that $|V_1\cap V(C)|=|V_2\cap V(C)|$. Consider the graph $G'=G-\{x\}$ formed by the application of the rule and the balanced subgraph $H'=H-\{x\}$. Then $|E(H)|=|E(H')|+\frac{|V(C)|}{2}$, and thus $\beta(G')\ge \beta(G)-\frac{|V(C)|}{2}$. Conversely, by Lemma \ref{half}, $\beta(G)\ge \beta(G')+\frac{|V(C)|}{2}$. However, $\pt{G}=\pt{G'}+\frac{|V(C)|}{2}-\frac{1}{4}$. Hence, $\beta(G)-\pt{G}= \beta(G')-\pt{G'}+\frac{1}{4}$.
Therefore, $G$ admits a balanced subgraph of size $\pt{G}+\frac{k}{4}$ if and only if $G'$ admits a balanced subgraph of size $\pt{G'}+\frac{k-1}{4}$.

\smallskip

\noindent{\bf Rule \ref{kerRsize3see2}:}
Let $C$ and $\{x,y,u\}$ be as in the description of Rule \ref{kerRsize3see2}. Firstly consider the case when $xy$ is a bridge in $G-\{u\}$. For any maximal balanced subgraph $H$ of $G$, without loss of generality one may assume that $xu,yu\in E(H)$ and $xy\notin E(H)$. Suppose $H$ is induced by a partition $(V_1,V_2)$ and $x,y\in V_1$. Form a balanced subgraph of $G'$ from $H-\{x,y,u\}$ by placing $z$ in $V_1$. Therefore, $\beta(G)=\beta(G')+2$. Since $\pt{G}=\pt{G'}+\frac{3}{2}+\frac{2}{4}=\pt{G'}+2$, it follows that $\beta(G)=\pt{G}+\frac{k}{4}$ if and only if $\beta(G')=\pt{G'}+\frac{k}{4}$.

Now consider the case when $xy$ is not a bridge in $G-\{u\}$. Then the graph $G'$ formed by deleting the vertex $u$ and the edge $xy$ is connected.
Furthermore, regardless of whether $x$ and $y$ are in the same partition that induces a balanced subgraph $H'$ of $G'$,
$H'$ can be extended to a balanced subgraph $H$ of $G$ such that $|E(H)|=|E(H')|+2$. This means that, as before, $\beta(G)=\beta(G')+2$.
But in this case $\pt{G}=\pt{G'}+\frac{7}{4}$ and thus $\beta(G)=\pt{G}+\frac{k}{4}$ if and only if
$\beta(G')=\pt{G'}+\frac{k-1}{4}$.

\smallskip

\noindent{\bf Rule \ref{kerRuleOneinS}:}
Let $T$ and $s\in S$ be as in the description of Rule \ref{kerRuleOneinS}. Since $\beta(G[V(T)\cup\{s\}])=\pt{G[V(T)\cup\{s\}]}+\frac{p}{4}$, by Lemma \ref{cutvertex}, $\beta(G)=\beta(G-T)+\pt{G[V(T)\cup\{s\}]}+\frac{p}{4}$. Also, by Lemma \ref{cutvertex}, $\pt{G}=\pt{G-T}+\pt{G[V(T)\cup\{s\}]}$. Hence $\beta(G)-\pt{G}=\beta(G-T)-\pt{G-T}+\frac{p}{4}$, which implies that $G$ admits a balanced subgraph of size $\pt{G}+\frac{k}{4}$ if and only if $G-T$ admits a balanced subgraph of size $\pt{G-T}+\frac{k-p}{4}$.
\end{proof}


To show the existence of a kernel with $O(k^3)$ vertices, it is enough to give a bound on the number of non-path blocks, the number of vertices in these blocks and the number of path vertices. This is done by Corollaries \ref{cor:pathblock} and \ref{cor:boundexterior} and Lemma \ref{lem:blocksize}.

While Lemma \ref{lem:blocksize} applies to any graph reduced by Rule \ref{kerRuleRemove2}, the proofs of Corollaries \ref{cor:pathblock} and \ref{cor:boundexterior} rely on Lemma \ref{lem:BlockBound}, which gives a general structural result on forest of cliques with a bounded number of special blocks and bounded path length. Corollary \ref{cor:boundLeafs} and Lemma \ref{lem:pathlength} provide sufficient conditions for a reduced instance to be a {\sc Yes}-instance, thus producing a bound on the number of special blocks and the path length of {\sc No}-instances.
Lastly, Theorem \ref{thm:kernel} puts the results together to show the existence of the kernel.

Henceforth, we assume that the instance $(G,k)$ is such that $G$ is reduced by Rules \ref{kerRuleFirst}-\ref{kerRuleLast}, $G-S$ is a forest of cliques which does not contain a positive edge and $|S|\le 3k$. 

\begin{lemma}\label{lem:Snb}
Let $T$ be a connected component of $G-S$. Then for every leaf block
$C$ of $T$, $N_G(\interior{C})\cap S \neq \emptyset$. Furthermore,
if $|N_G(S)\cap V(T)|=1$, then $T$
consists of a single vertex.
\end{lemma}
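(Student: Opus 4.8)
# Proof Proposal for Lemma \ref{lem:Snb}

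The plan is to prove both claims by contradiction, showing in each case that if the conclusion fails, then one of the reduction Rules \ref{kerRuleFirst}--\ref{kerRuleLast} would still apply, contradicting the standing assumption that $(G,k)$ is fully reduced.

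For the first statement, suppose $C$ is a leaf block of $T$ with $N_G(\interior{C})\cap S=\emptyset$. Since $C$ is a leaf block, $|\exterior{C}|\le 1$, so all but at most one vertex of $C$ lies in $\interior{C}$; that is, $|\interior{C}|\ge |V(C)|-1$. The hypothesis $N_G(\interior{C})\cap S=\emptyset$ means no interior vertex of $C$ has a neighbor in $S$, so for any $X\subseteq\interior{C}$ we have $N_G(X)\cap S=\emptyset$. I would then try to invoke Rule \ref{kerRuleRemove1} (or Rule \ref{kerRuleRemove2} with $|N_G(X)\cap S|=0$) by exhibiting a suitable $X$: since $G-S$ has no positive edges, all interior vertices of $C$ have identical (empty) positive and negative $S$-neighborhoods, so the homogeneity conditions $N_G^+(x)\cap S=N_G^+(X)\cap S$ and $N_G^-(x)\cap S=N_G^-(X)\cap S$ hold automatically. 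Taking $X=\interior{C}$ and using $|\interior{C}|\ge |V(C)|-1$, one checks that either $|X|>\tfrac{|V(C)|}{2}$ (so Rule \ref{kerRuleRemove2} applies, since $\tfrac{|V(C)|+0}{2}<|X|$) when $|V(C)|\ge 2$, making this the generic case; the small-$C$ cases (a single-vertex component, a path block, or the triangle of Rule \ref{kerRsize3see2}) must be handled separately, and these are exactly the cases covered by Rules \ref{kerRuleRemove1} and \ref{kerRsize3see2}. In every case the applicability of a rule contradicts reducedness.

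For the second statement, suppose $|N_G(S)\cap V(T)|=1$, say $N_G(S)\cap V(T)=\{s_0\}$; equivalently, $s_0$ is the unique vertex of $T$ adjacent to $S$. Then $T$ is a connected component of $G-S$ that is attached to $S$ through the single vertex $s_0$. If $T$ had more than one vertex, I would argue that $T$ is a connected component of $G-s$ for the relevant $s\in S$ adjacent to $s_0$, and more directly that $T$ is adjacent to only a small portion of $S$; in fact, when $s_0$ is the only attachment point, $T$ together with its connection to $S$ can be eliminated by Rule \ref{kerRuleOneinS}, provided $T$ is adjacent to only one vertex $s\in S$. The cleanest route is to observe that since $|N_G(S)\cap V(T)|=1$, the component $T$ is joined to $S$ only through $s_0$, so $s_0$ is a cutvertex separating $T$ from the rest of $G$; combined with the first part of the lemma applied to the leaf blocks of $T$, every leaf block must reach $S$ through $s_0$, forcing $T$ to collapse unless $T$ is a single vertex.

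The main obstacle I anticipate is the bookkeeping around the boundary cases for the first statement: the inequality in Rule \ref{kerRuleRemove2} requires the strict bound $|X|>\tfrac{|V(C)|+|N_G(X)\cap S|}{2}$ together with $\ge 1$, and when $|V(C)|$ is small (sizes $1$, $2$, or $3$) the bound $|\interior{C}|\ge |V(C)|-1$ does not by itself put us into the regime where Rule \ref{kerRuleRemove2} fires. Carefully matching each small block to the correct rule (single vertices to triviality, even blocks with a large interior to Rule \ref{kerRuleRemove1}, and the three-vertex see-through block to Rule \ref{kerRsize3see2}) is where the argument needs the most care, and it is precisely here that the hypothesis $\exterior{C}\le 1$ for leaf blocks together with the no-positive-edge property of $G-S$ must be used to verify the homogeneity conditions.
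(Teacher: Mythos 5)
Your argument for the first claim is essentially the paper's: take $X=\interior{C}$, observe that the homogeneity conditions are vacuous, and fire Rule~\ref{kerRuleRemove2} when $|\interior{C}|>|\exterior{C}|$ and Rule~\ref{kerRuleRemove1} when $|\interior{C}|=|\exterior{C}|$ (which, since $|\exterior{C}|\le 1$, is exactly the two-vertex path block); the only remaining case $|\interior{C}|<|\exterior{C}|\le 1$ forces $|V(C)|=1$, where connectedness of $G$ gives the claim directly. Two small corrections: the homogeneity conditions $N_G^{\pm}(x)\cap S=N_G^{\pm}(X)\cap S$ hold because $N_G(X)\cap S=\emptyset$ by assumption, not because $G-S$ has no positive edges (that fact is about edges inside $G-S$, not about $S$-neighbourhoods); and you never need Rule~\ref{kerRsize3see2} --- a three-vertex leaf block has $|\interior{C}|\ge 2>\frac{3+0}{2}$, so Rule~\ref{kerRuleRemove2} already applies, and Rule~\ref{kerRsize3see2} carries the extra hypothesis $N_G(u)=\{x,y\}$ that you have not verified anyway.

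The second claim is where your proposal has a genuine gap. The detour through Rule~\ref{kerRuleOneinS} conflates $|N_G(S)\cap V(T)|=1$ (only one vertex of $T$ sees $S$) with $|N_G(V(T))\cap S|=1$ ($T$ sees only one vertex of $S$); the hypothesis gives the former, not the latter, so that rule need not be applicable. Your ``cleanest route'' has the correct first half, once made explicit: by the first claim every leaf block of $T$ has an interior vertex in $N_G(S)$, interiors of distinct blocks are disjoint, so $T$ has exactly one leaf block and is therefore a single block. But ``forcing $T$ to collapse unless $T$ is a single vertex'' is not an argument: nothing you have said excludes, say, a single five-vertex block of which only $v$ is adjacent to $S$. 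The paper closes this exactly as in the first claim: since $T$ is one block, every vertex of $T$ is interior, so $X=V(T)\setminus\{v\}$ satisfies $X\subseteq\interior{T}$ and $N_G(X)\cap S=\emptyset$; if $|X|>1$ then $|X|=|V(T)|-1>\frac{|V(T)|}{2}$ and Rule~\ref{kerRuleRemove2} applies, and if $|X|=1$ then Rule~\ref{kerRuleRemove1} applies, so reducedness forces $X=\emptyset$ and $V(T)=\{v\}$. Without this final application of the reduction rules the second claim is unproved.
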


\begin{proof}
We start by proving the first claim. Note that if $T=C$ consists of a
single vertex, then $N_G(\interior{C})\cap S \neq \emptyset$ since $G$ is
connected.
So assume that $C$ has at least two vertices. Suppose that
$N_G(\interior{C})\cap S =
\emptyset$ and let $X=\interior{C}$. Then if $|\interior{C}|>
|\exterior{C}|$, Rule
\ref{kerRuleRemove2} applies. If  $|\interior{C}|=|\exterior{C}|$ then
Rule \ref{kerRuleRemove1} applies. Otherwise,
$|\interior{C}|<|\exterior{C}|$ and since $|\exterior{C}|\le 1$ (as $C$ is
a leaf block), $C$ has only one vertex, which contradicts our assumption
above.
For the second claim,
first note that since $|N_G(S)\cap V(T)|=1$, $Q$ has one leaf block and so
$T$ consists of a single block.
Let $N_G(S)\cap V(T)= \{ v \}$ and $X=V(T)-\{v\}$. If
$|X|>1$, Rule \ref{kerRuleRemove2} applies. If $|X|=1$, Rule
\ref{kerRuleRemove1} applies. Hence $V(T)=\{v\}$.

\end{proof}

Let ${\cal B}$ be the set of non-path blocks.

\begin{lemma}\label{lem:NhoodS}
If there exists a vertex $s\in S$ such that $\sum_{C\in {\cal B}} |N_G(\interior{C})\cap \{s\}| \ge 2(|S|-1+k)$, then $(G,k)$ is a {\sc Yes}-instance.
\end{lemma}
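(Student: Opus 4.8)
The plan is to exhibit a set $X$ inside a single special non-path block $C$ to which one of the earlier structural lemmas (Lemma~\ref{half} or Lemma~\ref{cutvertex}) can be applied in order to show $\beta(G)\ge \pt{G}+\frac{k}{4}$ directly. The hypothesis says that a single vertex $s\in S$ is in the interior-neighborhood of at least $2(|S|-1+k)$ non-path blocks. So there is a large family of non-path blocks $C_1,\dots,C_r$ with $r\ge 2(|S|-1+k)$, each having some interior vertex adjacent to $s$. The key intuition (matching the ``always gain over Poljak--Turz\'ik'' philosophy stated after the one-way rules) is that each such block, together with the edge from its interior to $s$, can be used to gain an extra $\frac14$ over the bound, and that these gains accumulate.

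First I would make precise the per-block gain. Fix a block $C_i$ with an interior vertex $x_i$ adjacent to $s$. Because $C_i$ is a non-path block, $|V(C_i)|\ge 3$, so $C_i$ is a clique on at least three vertices whose dual is balanced (Corollary~\ref{notriangle}); after switching we may assume all edges of $C_i$ are negative. The point is that for such a clique one can split its vertex set into two nearly-equal parts so that the induced balanced subgraph exceeds $\pt{C_i}$ by a definite positive amount, and moreover one has the freedom to place the distinguished vertex $x_i$ on whichever side agrees in sign with the edge $x_is$. This is exactly the computation already carried out inside the proof of Rule~\ref{rule:pincer} (even/odd clique balancing giving $\pt{}+\frac34$ or $\pt{}+\frac12$), and it is what lets the edge $x_is$ be counted as a ``satisfied'' edge rather than merely contributing $\frac12$ on average.

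Next I would assemble the global bound. The blocks $C_1,\dots,C_r$ attach to the rest of the graph through cutvertices, so repeated application of Lemma~\ref{cutvertex} lets me treat their contributions additively, while the edges to $s$ are handled by Lemma~\ref{half}: each time I merge a component along an edge to $s$ whose endpoint I have pre-oriented correctly, I pick up a surplus of $\frac14$ relative to $\pt{}$ instead of losing the $\frac{c_1+c_2-1}{4}$ deficit that a generic merge incurs. Accounting honestly, merging $r$ such pieces across the single vertex $s$ costs at most $|S|-1$ in deficit terms coming from the structure of $S$, but yields on the order of $\frac{r}{2}$ in surplus; with $r\ge 2(|S|-1+k)$ the net surplus is at least $\frac{r}{2}-(|S|-1)\ge k$, giving $\beta(G)\ge \pt{G}+\frac{k}{4}$, i.e.\ a \textsc{Yes}-instance.

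The main obstacle I anticipate is the bookkeeping in the merging step: Lemma~\ref{half} naturally produces the negative term $-(c_1+c_2-1)$ every time components are joined, so a careless induction would let these deficits eat up the gains. The crux is therefore to argue that because all the extra edges emanate from the \emph{same} vertex $s$, the blocks can be absorbed one at a time into a growing connected piece so that each absorption is a two-part split contributing a fresh $+\frac14$ while the structural deficit is incurred only $O(|S|)$ times overall rather than once per block. Making that ordering explicit, and verifying that the ``pre-orient $x_i$ to match the sign of $x_is$'' choice is simultaneously compatible across all blocks sharing $s$, is where the real care is needed; the factor $2$ in the threshold $2(|S|-1+k)$ is precisely what buys the slack to absorb both the per-merge loss and the rounding in the even/odd clique-balancing estimates.
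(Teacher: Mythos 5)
Your plan diverges from the paper's proof, and as written it has gaps that I do not think can be closed without switching to a different argument. The first problem is the per-block gain. An all-negative clique on an odd number $n$ of vertices has $\beta(K_n)=\frac{n-1}{2}\cdot\frac{n+1}{2}=\frac{n^2-1}{4}=\pt{K_n}$: the surplus over the Poljak--Turz\'{i}k bound is exactly zero, not ``a definite positive amount.'' The computation you borrow from the proof of Rule \ref{rule:pincer} concerns a clique \emph{plus two extra vertices} $v,b$ adjacent to all of it, which is why a surplus appears there. So for odd blocks the gain must come entirely from satisfying the edge $x_is$, and here the second problem arises: Lemma \ref{half} does not let you ``pre-orient'' $x_i$. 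Its proof takes the better of the two complementary ways of gluing the two partitions together; exactly one of these satisfies $x_is$, and it may be the one that captures fewer than half of the remaining cross-edges. Guaranteeing that $x_is$ is satisfied \emph{and} that half of the other edges leaving the block is retained is an additional claim you have not established.

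The third and most consequential gap is the deficit accounting. You assert that absorbing the $r$ blocks costs only $|S|-1$ in component-deficit overall, and that the factor $2$ in the threshold pays for ``rounding.'' In fact, once the chosen interior vertices are carved out, the remainder can fall into as many as $(|S|-1)+\frac{r}{2}$ components: a component of $G-S$ may reach $S$ only through vertices adjacent to $s$, and (by Rule \ref{kerRuleOneinS} and Lemma \ref{lem:Snb}) each such component consumes at least two of the chosen vertices, so there can be up to $r/2$ of them, each left stranded. The factor $2$ in $2(|S|-1+k)$ exists precisely to pay for this $\frac{r}{2}$ term, not for parity losses in clique balancing. The paper's proof sidesteps all three difficulties at once: it takes $U=T\cup\{s\}$, where $T$ holds one interior neighbour of $s$ per block, observes that $G[U]$ is a star and hence balanced by Theorem \ref{Htheorem}, so $\beta(G[U])=\pt{G[U]}+\frac{|T|}{4}$ with no case analysis on clique parity and no need to control which cross-edges survive, and then applies Lemma \ref{half} once with the component bound $(|S|-1)+\frac{|T|}{2}$. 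I would recommend reworking your argument along those lines.
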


\begin{proof}
Form $T\subseteq N_G(s)$ by picking a vertex from each block $C$
for which $|N_G(\interior{C})\cap \{s\}|=1$: if there exists a vertex
$x\in\interior{C}$ such that $N_G(x)\cap S=\{s\}$, pick this,
otherwise pick $x\in \interior{C}$ arbitrarily. Let $U=T\cup \{s\}$
and $W=V\setminus U$.

Observe that $G[U]$ is balanced by Theorem \ref{Htheorem} as $G[U]$ is
a tree. Thus $\beta(G[U])=|T|=\frac{|T|}{2}+\frac{|T|}{4}+\frac{|T|}{4}=\pt{G[U]}+\frac{|T|}{4}$.

Consider a connected component $Q$ of $G-S$. By Rule
\ref{kerRuleOneinS}, $|N_G(Q)\cap S|\ge 2$ and by Lemma \ref{lem:Snb}, if
$|N_G(S)\cap V(Q)|=1$ then $Q$ consists of a single vertex. Otherwise,
either $(N_G(S) \setminus N_G(s)) \cap V(Q) \neq \emptyset$, or $Q$ has
at least two vertices in $T$. Moreover, note that the
removal of interior vertices does not disconnect the component itself.
Hence $G[W]$ has at most
$(|S|-1)+\frac{|T|}{2}$ connected components. 
Applying Lemma \ref{half},
$\beta(G)\ge\pt{G}+\frac{|T|}{4}-\frac{(|S|-1)+\frac{|T|}{2}}{4}=\pt{G}+\frac{|T|}{8}-\frac{|S|-1}{4}$.
Hence if $|T|\ge 2(|S|-1+k)$, then $(G,k)$ is a {\sc Yes}-instance.
\end{proof}

\begin{corollary}\label{cor:boundLeafs}
If $\sum_{C\in {\cal B}} |N_G(\interior{C})\cap S| \ge |S|(2|S|-3+2k)+1$, the instance is a {\sc Yes}-instance. Otherwise, $\sum_{C\in {\cal B}} |N_G(\interior{C})\cap S| \le  3k(8k-3)$.
\end{corollary}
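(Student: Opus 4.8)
The plan is to derive Corollary \ref{cor:boundLeafs} as a direct consequence of Lemma \ref{lem:NhoodS} by summing the per-vertex bound over all vertices $s \in S$. The starting observation is that the quantity we want to bound, $\sum_{C\in {\cal B}} |N_G(\interior{C})\cap S|$, can be rewritten by exchanging the order of summation: since $|N_G(\interior{C})\cap S| = \sum_{s\in S} |N_G(\interior{C})\cap \{s\}|$, we have
\[
\sum_{C\in {\cal B}} |N_G(\interior{C})\cap S| = \sum_{s\in S} \left( \sum_{C\in {\cal B}} |N_G(\interior{C})\cap \{s\}| \right).
\]

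First I would argue the contrapositive for the first claim. Suppose $(G,k)$ is a {\sc No}-instance. Then by Lemma \ref{lem:NhoodS} (in contrapositive form), for every single vertex $s\in S$ we must have $\sum_{C\in {\cal B}} |N_G(\interior{C})\cap \{s\}| < 2(|S|-1+k)$, i.e.\ the inner sum is at most $2(|S|-1+k) - 1 = 2|S|-3+2k$. Summing this bound over all $|S|$ vertices of $S$ and using the identity above yields $\sum_{C\in {\cal B}} |N_G(\interior{C})\cap S| \le |S|(2|S|-3+2k)$. Equivalently, if $\sum_{C\in {\cal B}} |N_G(\interior{C})\cap S| \ge |S|(2|S|-3+2k)+1$, then the instance must be a {\sc Yes}-instance, which is exactly the first statement.

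For the second statement, I would simply substitute the bound $|S|\le 3k$ (which we are assuming throughout this part of the paper) into the polynomial $|S|(2|S|-3+2k)$. Treating this expression as increasing in $|S|$ for the relevant range, plugging in $|S|=3k$ gives $3k(2\cdot 3k - 3 + 2k) = 3k(6k-3+2k) = 3k(8k-3)$, which is the claimed bound. The monotonicity check is the only place requiring a small remark: since $2|S|-3+2k \ge 0$ whenever $|S|\ge 1$, the product is nondecreasing in $|S|$, so the maximum over $|S|\le 3k$ is attained at $|S|=3k$.

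I expect this corollary to be essentially a routine arithmetic packaging of Lemma \ref{lem:NhoodS}, so there is no substantive obstacle; the only points demanding care are getting the off-by-one right (the strict inequality in the contrapositive of Lemma \ref{lem:NhoodS} becomes $\le 2|S|-3+2k$ after taking integers) and confirming the monotonicity needed to replace $|S|$ by its upper bound $3k$. The real mathematical content lives in Lemma \ref{lem:NhoodS}, whose proof constructs a balanced subgraph exceeding the Poljak--Turz\'ik bound by the required margin; this corollary merely aggregates that per-vertex statement across $S$.
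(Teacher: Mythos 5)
Your proof is correct and takes essentially the same approach as the paper: both rest on the double-counting identity $\sum_{C\in\mathcal{B}}|N_G(\interior{C})\cap S|=\sum_{s\in S}\sum_{C\in\mathcal{B}}|N_G(\interior{C})\cap\{s\}|$, an averaging/pigeonhole step combined with integrality to reach the threshold $2(|S|-1+k)$ of Lemma \ref{lem:NhoodS}, and the substitution $|S|\le 3k$ for the second bound. The only difference is that you argue in contrapositive form (bounding each inner sum for a {\sc No}-instance and summing) while the paper argues forward (a large total forces some $s$ to exceed the average), which is the same argument.
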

\begin{proof}
If $\sum_{C\in {\cal B}} |N_G(\interior{C})\cap S| \ge |S|(2|S|-3+2k)+1$, then for some $s\in S$ we have $\sum_{C\in {\cal B}} |N_G(\interior{C})\cap \{s\}| \ge 2|S|-3+2k+1/|S|$ and, since the sum is integral, $\sum_{C\in {\cal B}} |N_G(\interior{C})\cap \{s\}| \ge 2(|S|-1+k)$. Thus, $(G,k)$, by Lemma \ref{lem:NhoodS}, is a {\sc Yes}-instance. The second inequality of the corollary follows from the fact that $|S|\le 3k.$
\end{proof}

\begin{lemma}\label{lem:pathlength}
If in $G-S$ there exist vertices $U=\{u_1,u_2,\ldots,u_p\}$ such that  $N_{G-S}(u_i)= \{u_{i-1},u_{i+1} \}$ for $2\leq i\leq p-1$, and $p\ge |S|+k+1$, then $(G,k)$ is {\sc Yes}-instance. Otherwise, $p\leq 4k$.
\end{lemma}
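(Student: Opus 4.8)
The plan is to find a long induced path $U=u_1u_2\cdots u_p$ inside $G-S$ and use it, together with its neighbourhood in $S$, to build a partition whose induced balanced subgraph beats the Poljak--Turz\'ik bound by enough to certify a {\sc Yes}-instance. This mirrors the strategy of Lemma~\ref{lem:NhoodS}: isolate a connected tree-like piece of $G$ on which $\beta$ exceeds $\pt{}$ by a large amount, then apply Lemma~\ref{half} and pay only a small penalty for the number of components that the removal creates.

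First I would set $U=\{u_1,\ldots,u_p\}$ and note that since $N_{G-S}(u_i)=\{u_{i-1},u_{i+1}\}$ for the interior indices, the path vertices $u_2,\ldots,u_{p-1}$ have all their $G-S$-neighbours inside $U$; their only other neighbours lie in $S$. Take $U'$ to be a suitable sub-collection of $U$ (say, roughly every other vertex, or the interior of $U$) and put $W=V\setminus U'$. The induced subgraph $G[U']$ or $G[U'\cup\{\text{endpoints}\}]$ is a path, hence a tree, hence balanced by Theorem~\ref{Htheorem}, so on it $\beta=\pt{}+\tfrac{|U'|}{4}$, giving a surplus linear in $p$. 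This is exactly the term that will drive the bound.

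Next I would bound the number of components of $G[W]$ created by deleting $U'$. Because the deleted vertices form (most of) an induced path, removing them can split the remaining graph into at most $O(|S|)$ pieces attached through $S$, plus a bounded contribution from the two ends of the path; crucially the count is $O(|S|)$ rather than $O(p)$, so that the component-penalty $\tfrac{c_1+c_2-1}{4}$ in the second part of Lemma~\ref{half} is dominated by the $\tfrac{p}{4}$-sized surplus. Applying Lemma~\ref{half} then yields $\beta(G)\ge\pt{G}+\tfrac{1}{4}\bigl(\Theta(p)-O(|S|)\bigr)$, and choosing the threshold $p\ge|S|+k+1$ makes the bracket at least $k$, so $(G,k)$ is a {\sc Yes}-instance. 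The final inequality $p\le 4k$ follows from $|S|\le 3k$ by taking the contrapositive: if $(G,k)$ is not already decided {\sc Yes}, then $p<|S|+k+1\le 3k+k+1$, and the integral bound tightens to $p\le 4k$.

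The main obstacle will be the careful accounting in the second step: choosing $U'$ so that the surplus on $G[U']$ is genuinely $\Theta(p)$ while simultaneously keeping the number of components of $G[W]$ to $O(|S|)$. Path vertices can attach to $S$ as well as to their two path-neighbours, so deleting an interior segment of the path might disconnect the two tails from each other unless I am careful to leave a connecting skeleton or to route the connectivity through $S$. I would resolve this exactly as in Lemma~\ref{lem:NhoodS}, by observing that any newly isolated piece of $G-S$ must either meet $S\setminus\{s\}$ for some already-counted $s$ or retain a link through $U$, so the number of components stays bounded by $|S|-1$ plus a small constant, which is enough to close the gap to the claimed threshold.
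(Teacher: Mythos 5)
Your proposal follows the paper's proof essentially verbatim: the paper simply takes $U'=U$ (the whole path), notes that $G[U]$ is a tree and hence balanced with $\beta(G[U])=p-1=\pt{G[U]}+\frac{p-1}{4}$, bounds the number of components of $G[V\setminus U]$ by $|S|$ via Lemma~\ref{lem:Snb}, and applies Lemma~\ref{half} to get $\beta(G)\ge\pt{G}+\frac{p-1-|S|}{4}$, after which the threshold $p\ge |S|+k+1$ and the bound $p\le 4k$ follow exactly as you compute. The one caution is that of your two candidate choices for $U'$, ``roughly every other vertex'' would give an edgeless graph with zero surplus and so would fail; the whole path (or its interior) is the choice that works, as in the paper.
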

\begin{proof}
Observe that $G[U]$ is balanced by Theorem \ref{Htheorem}. Thus $\beta(G[U])=p-1=\pt{G[U]}+\frac{p-1}{4}$. 
Let $W=V\setminus U$ and observe that $G[W]$ has at most $|S|$ components, since, by Lemma  \ref{lem:Snb}, every vertex in $G-U$ has a path to a vertex in $S$.
Applying Lemma \ref{half}, $\beta(G)\ge\pt{G}+\frac{p-1}{4}-\frac{|S|}{4}$. Hence if $p-1-|S|\ge k$, $(G,k)$ is a {\sc Yes}-instance.
\end{proof}

\begin{lemma}\label{pathorspecial}
A block $C$ in $G-S$ such that $|\exterior{C}|=2$ is either special or it is a path block.
\end{lemma}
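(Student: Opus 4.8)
The plan is to run a short case analysis on $|V(C)|$, using the fact that $C$, being a block of the forest of cliques $G-S$, is a clique. Since $|\exterior{C}|=2$ we have $|V(C)|\ge 2$, and the only way the conclusion can fail is if $C$ is neither special nor a path block. If $|V(C)|=2$, then $|V(C)|=2=|\exterior{C}|$ is exactly the definition of a path block, so that case is immediate. Hence the real content is to show that whenever $|V(C)|\ge 3$ the block $C$ must be special; I would prove this by contradiction against the assumption (standing throughout this part of the paper) that $G$ is reduced by Rules \ref{kerRuleFirst}--\ref{kerRuleLast}.

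So suppose $|V(C)|\ge 3$ and, for contradiction, that $C$ is \emph{not} special, i.e.\ $\interior{C}\cap N_G(S)=\emptyset$. Note $|\interior{C}|=|V(C)|-2\ge 1$, so the interior is nonempty; put $X=\interior{C}$, and observe $N_G(X)\cap S=\emptyset$. In particular the sign conditions $N_G^{+}(x)\cap S=N_G^{+}(X)\cap S$ and $N_G^{-}(x)\cap S=N_G^{-}(X)\cap S$ hold trivially for all $x\in X$ (all these sets are empty). The plan is then to split on the parity and size of $|V(C)|$ and exhibit an applicable rule in each subcase, contradicting reducedness.

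The subcases are: (i) if $|V(C)|\ge 5$, then $|X|=|V(C)|-2>\tfrac{|V(C)|}{2}=\tfrac{|V(C)|+|N_G(X)\cap S|}{2}\ge 1$, so Rule \ref{kerRuleRemove2} applies; (ii) if $|V(C)|=4$, then $|V(C)|$ is even, $|X|=2=\tfrac{|V(C)|}{2}$ and $N_G(X)\cap S=\emptyset$, so Rule \ref{kerRuleRemove1} applies; (iii) if $|V(C)|=3$, write $V(C)=\{x,y,u\}$ with $u$ the unique interior vertex and $x,y$ the two exterior vertices. Since $C$ is a clique, $N_{G-S}(u)=\{x,y\}$, and since $u\in\interior{C}$ is assumed not to lie in $N_G(S)$, we get $N_G(u)=\{x,y\}$; this is precisely the hypothesis of Rule \ref{kerRsize3see2}, which therefore applies. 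In every subcase a reduction rule is applicable, contradicting the assumption that $G$ is reduced; hence $C$ is special.

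The main obstacle is the $|V(C)|=3$ subcase: there $|X|=1$ is too small for Rule \ref{kerRuleRemove2} (whose threshold is $\tfrac{|V(C)|}{2}=\tfrac32$) and $|V(C)|$ is odd so Rule \ref{kerRuleRemove1} does not apply either. The point to get right is that this triangle-with-one-interior-vertex configuration is exactly what Rule \ref{kerRsize3see2} is designed to handle, so the key verification is simply that $N_G(u)=\{x,y\}$ follows from $u$ being interior and $\interior{C}\cap N_G(S)=\emptyset$. The remaining subcases are routine size/parity checks.
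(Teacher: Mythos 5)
Your proof is correct and takes essentially the same route as the paper: a case analysis on $|V(C)|$ showing that a non-special block with $|\exterior{C}|=2$ would trigger Rule \ref{kerRuleRemove2} (if $|V(C)|\ge 5$), Rule \ref{kerRuleRemove1} (if $|V(C)|=4$), or Rule \ref{kerRsize3see2} (if $|V(C)|=3$), forcing $|V(C)|=2$. You in fact supply more detail than the paper does, notably the verification that $N_G(u)=\{x,y\}$ in the size-$3$ case.
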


\begin{proof}
Suppose $C$ is not special. If $|V(C)|\ge 5$, then Reduction Rule \ref{kerRuleRemove2} would apply. 
If $|V(C)|= 4$, then Reduction Rule \ref{kerRuleRemove1} would apply. If $|V(C)|= 3$, then Reduction Rule \ref{kerRsize3see2} would apply. Hence $|V(C)|=2$ and it is a path block.
\end{proof}

In $G-S$, a {\em pure path} is a path consisting exclusively of path vertices. Note that every path vertex belongs to a unique pure path.

\begin{lemma}\label{lem:BlockBound}
Suppose $G-S$ has at most $l$ special blocks and the number of vertices in each pure path is bounded by $p$. 
Then $G-S$ contains at most $2l$ non-path blocks and $2pl$ path vertices.
\end{lemma}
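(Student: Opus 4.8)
The plan is to exploit the block-cut tree of the forest of cliques $G-S$ together with the observation that, for a block $C$, the quantity $|\exterior{C}|$ equals the degree of $C$ in this tree: a vertex of $C$ lies in $\exterior{C}$ exactly when it has a neighbour outside $C$ in $G-S$, i.e.\ when it is a cut vertex joining $C$ to another block. I would classify the blocks of $G-S$ by this degree: a block with $|\exterior{C}|\le 1$ is a leaf block; a block with $|\exterior{C}|=2$ is, by Lemma \ref{pathorspecial}, either a path block or a special block; and a block with $|\exterior{C}|\ge 3$ is a branching node of the block-cut tree, and is in particular non-path.

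First I would bound the non-path blocks. Every leaf block is special by Lemma \ref{lem:Snb}, and every non-path block with $|\exterior{C}|=2$ is special by Lemma \ref{pathorspecial}. These two families are disjoint, since their members have different block-cut-tree degrees, and both consist of special blocks; hence together they number at most $l$. It remains to count the branching blocks, those with $|\exterior{C}|\ge 3$. By the standard degree inequality for a tree, within each component the number of nodes of degree at least $3$ in the block-cut tree is at most the number of its leaves, and those leaves are precisely its leaf blocks. Summing over components and using that all leaf blocks are special (hence there are at most $l$ of them) yields at most $l$ branching blocks. Therefore $G-S$ has at most $2l$ non-path blocks.

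For the path vertices, since every path vertex lies in a unique pure path and each pure path contains at most $p$ vertices, it suffices to show there are at most $2l$ pure paths, and I would prove this by injecting the set of pure paths into the set of non-path blocks. Root the block-cut tree of each component of $G-S$ at a non-path block; such a block always exists, as every nontrivial component has a leaf block and an isolated vertex is itself a non-path block. A pure path $u_1\cdots u_m$ corresponds to a maximal chain of path blocks, and by its maximality each of its two ends attaches, through a cut vertex, to a non-path block. I would assign to each pure path the non-path block hanging immediately below the pure path's lower end in the rooted tree. Since the unique path from this block to the root passes through the assigned pure path, no non-path block can be assigned to two distinct pure paths, so the assignment is injective. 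Hence the number of pure paths is at most the number of non-path blocks, namely at most $2l$, and consequently $G-S$ contains at most $2pl$ path vertices.

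The main obstacle is the pure-path count rather than the block count. Once $|\exterior{C}|$ is identified with the block-cut-tree degree, the non-path block bound follows directly from Lemmas \ref{lem:Snb} and \ref{pathorspecial} together with the tree degree inequality. The delicate part is justifying that each pure path actually reaches a non-path block at its lower end and that distinct pure paths receive distinct blocks; this is exactly where rooting the block-cut tree and invoking maximality of the pure path are needed to secure injectivity.
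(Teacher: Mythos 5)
Your proof is correct and takes essentially the same approach as the paper: both arguments reduce the claim to degree counting in a tree of blocks, using Lemma \ref{lem:Snb} to make every leaf special and Lemma \ref{pathorspecial} to handle degree-2 blocks, and both charge each pure path to an adjacency between non-path blocks. The only cosmetic difference is that you work directly in the block-cut tree and use a rooted injection for the pure paths, whereas the paper contracts the path-block chains into an auxiliary tree $F$ on the non-path blocks and bounds the pure paths by the number of edges of $F$.
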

\begin{proof}
It suffices to prove that if every connected component $T$ of $G-S$ has at most $l_T$ special blocks, then $T$ contains at most $2l_T$ non-path blocks and $2pl_T$ path vertices. So, we may assume that $T=G-S$ is connected. Pick an arbitrary non-path block $C_R$ as the `root' node. Define the distance $d(C_R,C)$ as the number of non-path blocks different from $C_R$ visited in a path from a vertex in $C_R$ to a vertex in $C$. For every non-path block $C$ in $T$, the {\em parent} block $C'$ is the unique non-path block such that $C'$ contains an edge of any path from $C_R$ to $C$ and $d(C_R,C)-d(C_R,C')=1$. In addition, $C_R$ is the parent of every block $C$ such that $d(C_R,C)=1$.

Consider the tree $F$ that contains a vertex for every non-path block of $T$ and such that there is an edge between two vertices if and only if one of the corresponding blocks is the parent of the other. Observe that given a vertex $v\in F$ which corresponds to a block $C$ of $T$, it holds that $d_F(v)\geq |\exterior{C}|$. In addition, by Lemma \ref{lem:Snb}, every leaf in $F$ corresponds to a special block.

Now, we know that in a tree the number of vertices of degree greater or equal to three is bounded by the number of leaves. Moreover, by Lemma \ref{pathorspecial}, if a block $C$ is such that $|\exterior{C}|=2$, then it is either special or a path block. Thus, the number of non-path blocks is bounded by $2l$.

Furthermore, note that the number of pure paths in $T$ is bounded by the number of edges in $F$, which is bounded by $2l-1$. Since every pure path contains at most $p$ path vertices, the number of path vertices is bounded by $(2l-1)p< 2pl$.
\end{proof}

\begin{corollary}\label{cor:pathblock}
$G-S$ contains at most $6k(8k-3)$ non-path blocks and $24k^2(8k-3)$ path vertices.
\end{corollary}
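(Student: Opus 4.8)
The plan is to combine the two bounds that the preceding lemmas have just established and feed them into the general structural result of Lemma \ref{lem:BlockBound}. Specifically, Corollary \ref{cor:boundLeafs} tells us that, unless $(G,k)$ is already a \textsc{Yes}-instance, $\sum_{C\in {\cal B}} |N_G(\interior{C})\cap S| \le 3k(8k-3)$, and Lemma \ref{lem:pathlength} tells us that every pure path has at most $4k$ vertices. The first step is to observe that the number $l$ of special blocks is bounded by the quantity in Corollary \ref{cor:boundLeafs}: by definition a special block $C$ satisfies $\interior{C}\cap N_G(S)\neq\emptyset$, so it contributes at least $1$ to the sum $\sum_{C\in {\cal B}} |N_G(\interior{C})\cap S|$ (note that a special block cannot be a path block, since a path block has empty interior, so special blocks all lie in ${\cal B}$). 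Hence $l \le 3k(8k-3)$.

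With $l \le 3k(8k-3)$ and $p \le 4k$ in hand, I would simply invoke Lemma \ref{lem:BlockBound}, which bounds the number of non-path blocks by $2l$ and the number of path vertices by $2pl$. Substituting gives at most $2\cdot 3k(8k-3) = 6k(8k-3)$ non-path blocks and at most $2\cdot 4k\cdot 3k(8k-3) = 24k^2(8k-3)$ path vertices, which is exactly the claimed bound. The corollary therefore follows directly by arithmetic once the hypotheses of Lemma \ref{lem:BlockBound} are verified.

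The only genuine point requiring care, and the step I expect to be the main (mild) obstacle, is matching the hypotheses of Lemma \ref{lem:BlockBound} to the conclusions of the earlier results. Lemma \ref{lem:BlockBound} assumes a bound $l$ on the number of special blocks and a bound $p$ on the number of vertices in each pure path; I must make sure that ``the number of vertices in each pure path'' in Lemma \ref{lem:BlockBound} is the same quantity bounded by Lemma \ref{lem:pathlength}, whose hypothesis concerns a maximal sequence of degree-two vertices in $G-S$. Since a pure path consists exactly of path vertices, each of which lies in only path blocks and hence has degree two in $G-S$ (with the possible exception of the path's endpoints, which still fit the stated form), the bound $p\le 4k$ transfers. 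Once this identification is made the proof is a one-line substitution, so the substance of the corollary is entirely carried by the three lemmas it cites.
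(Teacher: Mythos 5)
Your proof is correct and follows the paper's argument exactly: bound the number of special blocks by $3k(8k-3)$ via Corollary \ref{cor:boundLeafs}, bound pure-path lengths by $4k$ via Lemma \ref{lem:pathlength}, and substitute into Lemma \ref{lem:BlockBound}. The extra care you take in checking that special blocks each contribute at least one to the sum and cannot be path blocks is a sound (and slightly more explicit) version of what the paper leaves implicit.
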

\begin{proof}
By Corollary \ref{cor:boundLeafs}, $G-S$ contains at most $3k(8k-3)$ special blocks and by Lemma \ref{lem:pathlength}, the length of every pure path is bounded by $4k$. Thus, Lemma \ref{lem:BlockBound} implies that $G-S$ contains at most $6k(8k-3)$ non-path blocks and $24k^2(8k-3)$ path vertices.
\end{proof}

\begin{corollary}\label{cor:boundexterior}
$G-S$ contains at most $12k(8k-3)$ vertices in the exteriors of non-path blocks.
\end{corollary}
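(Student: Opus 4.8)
The plan is to reuse the block-tree structure $F$ constructed in the proof of Lemma~\ref{lem:BlockBound}, together with the bound on the number of non-path blocks from Corollary~\ref{cor:pathblock}. Recall that $F$ has one vertex $v_C$ for each non-path block $C$ of $G-S$, and that the proof of Lemma~\ref{lem:BlockBound} records the inequality $d_F(v_C)\ge |\exterior{C}|$ for every such block.

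First I would reduce the quantity we must bound, namely the number of \emph{distinct} vertices lying in $\bigcup_{C\in{\cal B}}\exterior{C}$, to the incidence sum $\sum_{C\in{\cal B}}|\exterior{C}|$. An exterior vertex is a cut vertex and may be shared among several blocks, so a single vertex can be counted more than once in this sum; nevertheless $\left|\bigcup_{C\in{\cal B}}\exterior{C}\right|\le \sum_{C\in{\cal B}}|\exterior{C}|$, so it suffices to bound the sum.

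Next I would apply the degree inequality block by block and sum over ${\cal B}$, using the handshake identity and the fact that $F$ is a forest:
\[
\sum_{C\in{\cal B}}|\exterior{C}|\ \le\ \sum_{v\in V(F)}d_F(v)\ =\ 2|E(F)|\ \le\ 2\bigl(|V(F)|-1\bigr).
\]
Since $V(F)$ is in bijection with ${\cal B}$, Corollary~\ref{cor:pathblock} gives $|V(F)|\le 6k(8k-3)$, and hence the number of exterior vertices is at most $2\bigl(6k(8k-3)-1\bigr)<12k(8k-3)$, which is the claimed bound.

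The only subtle point, and the one I would be most careful about, is the inequality $|\exterior{C}|\le d_F(v_C)$ that we are borrowing. Each exterior vertex of $C$ is a cut vertex from which a subtree of $G-S$ branches off, and tracing that subtree --- possibly through a pure path of path blocks --- reaches a non-path block that is a distinct neighbour of $v_C$ in $F$; distinctness is forced by the acyclicity of the block structure of a forest of cliques, since two exterior vertices of $C$ cannot route to the same block without creating a cycle. This is precisely the observation already used inside Lemma~\ref{lem:BlockBound}, so no new argument is needed there; the remaining care is only to keep the count over distinct vertices (handled by the over-count above) and to allow $F$ to be a forest rather than a tree when $G-S$ is disconnected (handled by $|E(F)|\le |V(F)|-1$).
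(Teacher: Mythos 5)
Your proposal is correct and follows essentially the same route as the paper: both arguments reuse the forest $F$ from Lemma~\ref{lem:BlockBound} and bound the exterior vertices by (twice) the number of its edges, the paper by assigning each exterior vertex to an edge of $F$ with multiplicity at most two, and you by summing the inequality $|\exterior{C}|\le d_F(v_C)$ over all non-path blocks and applying the handshake identity. The two counts are the same incidence bound $2|E(F)|\le 2|{\cal B}|\le 12k(8k-3)$ viewed from opposite sides, so no further comparison is needed.
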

\begin{proof}
For any component $T$ of $G-S$, consider the tree $F$ defined in the proof of Lemma \ref{lem:BlockBound}. For any block $C$ of $T$ and any vertex $v$ in $\exterior{C}$, $v$ corresponds to an edge of $F$. Furthermore, for any edge of $F$ there are at most two exterior vertices in $T$ that correspond to it. Therefore, $|\cup_{C\in {\cal B}}\exterior{C}|\leq 2|{\cal B}|\leq 12k(8k-3)$.
\end{proof}

\begin{lemma}\label{lem:blocksize}
For a block $C$, if $|V(C)|\ge 2|\exterior{C}|+|N_G(\interior{C})\cap S|
(2|S|+2k+1)$, then $(G,k)$ is a {\sc Yes}-instance. Otherwise, $|V(C)|\le
2|\exterior{C}|+|N_G(\interior{C})\cap S| (8k+1)$.
\end{lemma}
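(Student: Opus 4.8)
The plan is to bound the size of a block $C$ in $G-S$ by analysing how a large interior forces a good balanced subgraph. The key observation is that the interior vertices of $C$ fall into equivalence classes according to their adjacency into $S$: define $x\sim y$ for $x,y\in\interior{C}$ if $N_G^+(x)\cap S=N_G^+(y)\cap S$ and $N_G^-(x)\cap S=N_G^-(y)\cap S$. Since every interior vertex has all its $G-S$ neighbours inside $V(C)$ and $C$ is a clique, the only way two interior vertices can differ is in their attachment to $S$. I would first count the number of such classes: a vertex in $\interior{C}$ with no neighbour in $S$ forms one distinguished class, and every other class is ``charged'' to the set $N_G(\interior{C})\cap S$. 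The crux is that the number of distinct nonempty sign-patterns into $S$ that actually occur is at most the number of vertices of $S$ adjacent to $\interior{C}$, i.e.\ at most $|N_G(\interior{C})\cap S|$ (each occupied pattern can be injected into a distinct element of that set, since an interior vertex with a nonempty pattern uses at least one such neighbour). This gives at most $|N_G(\interior{C})\cap S|+1$ equivalence classes.

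Next I would apply Reduction Rule \ref{kerRuleRemove2} as the pigeonhole engine. Because $G$ is assumed reduced by Rules \ref{kerRuleFirst}--\ref{kerRuleLast}, no large equivalence class can survive: for each class $X\subseteq\interior{C}$ with common sign-pattern, Rule \ref{kerRuleRemove2} forbids $|X|>\frac{|V(C)|+|N_G(X)\cap S|}{2}$, and since $N_G(X)\cap S$ is exactly the (single) pattern's neighbour set with $|N_G(X)\cap S|\le|N_G(\interior{C})\cap S|$, each class satisfies $|X|\le\frac{|V(C)|+|N_G(\interior{C})\cap S|}{2}$. However this quadratic-in-$|V(C)|$ bound is too weak on its own, so I would instead use the \emph{hypothesis} of the lemma: if $|V(C)|$ is large, I want to exhibit a balanced subgraph beating the Poljak--Turz\'{i}k bound by $\frac{k}{4}$, i.e.\ prove the {\sc Yes}-instance conclusion directly. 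The idea is that a clique $C$ with few attachment points to the rest of the graph (only $\exterior{C}$ and $N_G(\interior{C})\cap S$) but many interior vertices behaves almost like an isolated clique, and an isolated negative clique of size $|V(C)|$ has $\beta=\pt{}+\Theta(|V(C)|)$ slack, far above the bound once $|V(C)|$ dominates the interface size.

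Concretely, I would set $U=V(C)$ and $W=V\setminus U$ and apply Lemma \ref{half}. By Corollary \ref{cor:equiv} we may take all edges of $G[U]$ negative (it is a clique, hence chordal without positive triangles), so $G[U]$ is a negative clique whose maximum balanced subgraph is a balanced complete bipartite graph, giving $\beta(G[U])=\pt{G[U]}+\frac{|V(C)|-1}{4}$ roughly (using the exact even/odd split as in the proof of Rule \ref{rule:pincer}). The penalty term in Lemma \ref{half} comes from the number of components created in $G[W]$, which is controlled by the interface: removing $V(C)$ can split $G-S$ into at most $|\exterior{C}|$ pieces plus the cost of the $|N_G(\interior{C})\cap S|$ attachments, and reconnecting through $S$ costs at most $O(|S|)$ per special attachment. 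Summing, the gain over $\pt{G}$ is at least $\frac{|V(C)|-c}{4}$ for an interface cost $c=O(|\exterior{C}|+|N_G(\interior{C})\cap S|(|S|+k))$; once $|V(C)|\ge 2|\exterior{C}|+|N_G(\interior{C})\cap S|(2|S|+2k+1)$ this gain reaches $\frac{k}{4}$, yielding a {\sc Yes}-instance.

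The main obstacle I anticipate is the precise accounting of the component penalty in Lemma \ref{half}: I must show that deleting the clique $C$ and routing the remaining graph through $S$ and the external vertices does not create more than $|\exterior{C}|+|N_G(\interior{C})\cap S|$ (times an $O(|S|+k)$ factor) extra components, so that the coefficient of $|N_G(\interior{C})\cap S|$ lands exactly at $2|S|+2k+1$. Getting this constant right requires carefully separating the role of the $|\exterior{C}|$ exterior vertices (each accounting for one unit, doubled as in Corollary \ref{cor:boundexterior}) from the $S$-attachments (each of which, when cut, can be reconnected only through other vertices of $S$, contributing the $2|S|+2k$ factor as in Lemma \ref{lem:NhoodS}). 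The even/odd parity bookkeeping for $\beta(G[U])$ of the clique, already handled cleanly in the proof of Rule \ref{rule:pincer}, should transfer without difficulty and is not where the real difficulty lies.
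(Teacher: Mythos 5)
There is a genuine gap, and it is fatal to the main line of your argument. Your ``direct approach'' rests on the claim that an isolated negative clique on $|V(C)|$ vertices has $\beta=\pt{\cdot}+\Theta(|V(C)|)$ slack. This is false: a negative clique is (essentially) the \emph{extremal} example for the Poljak--Turz\'{i}k bound. For a negative clique $K$ on $n$ vertices, $\beta(K)=\lfloor n^2/4\rfloor$ while $\pt{K}=\frac{n(n-1)}{4}+\frac{n-1}{4}=\frac{n^2-1}{4}$, so the slack is $\frac14$ for $n$ even and $0$ for $n$ odd --- exactly the computation carried out in the paper's proof of the safeness of Rule \ref{rule:pincer}. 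Consequently, taking $U=V(C)$ and applying Lemma \ref{half} yields no gain whatsoever over $\pt{G}$, no matter how large $|V(C)|$ is, and the {\sc Yes}-instance conclusion cannot be reached this way. Your preliminary counting step is also wrong: the number of distinct sign-patterns from $\interior{C}$ into $S$ is not bounded by $|N_G(\interior{C})\cap S|+1$ (two distinct patterns can share a neighbour, so your ``injection'' does not exist; in general there can be up to $3^{|N_G(\interior{C})\cap S|}$ patterns), though you correctly observe that even a linear class count combined with Rule \ref{kerRuleRemove2} would be too weak.

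The missing idea is that the gain must come from the edges between $S$ and $\interior{C}$, not from the clique itself. The paper fixes $s\in N_G(\interior{C})\cap S$ and shows that if $|N_G^+(s)\cap\interior{C}|$ is ``medium'' --- between $\frac{k+|S|}{2}$ and $|\interior{C}|-\frac{k+|S|}{2}$ --- then one can choose $U_1\subseteq N_G^+(s)\cap\interior{C}$ and $U_2\subseteq\interior{C}\setminus N_G^+(s)$ of size $\lceil\frac{k+|S|}{2}\rceil$ each; the balanced bipartite graph on $(U_1\cup\{s\},U_2)$ together with the $s$--$U_1$ and $s$--$U_2$ edges beats $\pt{G[U]}$ by $\frac{k+|S|}{4}$, and Lemma \ref{half} (with at most $|S|$ components in the complement) then certifies a {\sc Yes}-instance. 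Ruling this out for every $s$ and both signs forces all but $|N_G(\interior{C})\cap S|(k+|S|)$ of the interior vertices into a \emph{single} sign-pattern class $X$, and only then does Rule \ref{kerRuleRemove2} close the loop via $|X|\le\frac{|V(C)|+|N_G(X)\cap S|}{2}$, giving $|V(C)|\le 2|\exterior{C}|+|N_G(\interior{C})\cap S|(2|S|+2k+1)$.
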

\begin{proof}
Consider a fixed $s\in N_G(\interior{C})\cap S$. We will show that we
may assume that either $|N_G^+(s)\cap \interior{C}|\le \frac{k+|S|}{2}$ or
$|N_G^+(s)\cap
\interior{C}|\ge
|\interior{C}|-\frac{k+|S|}{2}$, because otherwise $(G,k)$ is a {\sc
Yes}-instance.

Indeed, suppose $\lceil\frac{k+|S|}{2}\rceil \leq |N_G^+(s)\cap
\interior{C}|\leq|\interior{C}|-\lceil\frac{k+|S|}{2}\rceil$. Let
$U_1\subseteq N_G^+(s)\cap \interior{C}$, $|U_1|=\lceil
\frac{k+|S|}{2} \rceil$, and let $U_2\subseteq\interior{C}\setminus
N_G^+(s)$, $|U_2|=\lceil \frac{k+|S|}{2} \rceil$.
Let $U=U_1\cup U_2\cup\{s\}$ and consider the subgraph $H$ of $G[U]$
induced by the edges $E(U_1,U_2)\cup E(s,(U_1\cap N_G^+(s)))\cup
E(s,(U_2\cap N_G^-(s))).$ Observe that $H$ is
$(U_1\cup\{s\},U_2)$-balanced and so $\beta(G[U])\ge |U_1|^2+|U_1\cap
N_G^+(s)|+|U_2\cap N_G^-(s)|$. Furthermore,
$\pt{G[U]}=|U_1|^2+\frac{|U_1\cap N_G^+(s)|}{2}+\frac{|U_2\cap
N_G^-(s)|}{2}$, and
hence $\beta(G[U])\ge \pt{G[U]}+\frac{|U_1\cap N_G^+(s)|+|U_2\cap
N_G^-(s)|}{2}\geq\pt{G[U]}+\frac{k+|S|}{4}$.

Now consider $W=V\setminus U$. Any connected component of $G-S$ is
connected to two vertices in $S$, hence $G[W]$ has at most $|S|-1$
components adjacent to vertices in $S\setminus \{s \}$ and one component
corresponding to the block $C$.
Applying Lemma \ref{half}, $\beta(G)\ge\pt{G}+\frac{(k+|S|)-|S|}{4}$,
which means that $(G,k)$ is a {\sc Yes}-instance.

Similarly, we can show that we may assume that either $|N_G^-(s)\cap
\interior{C}|\le \frac{k+|S|}{2}$ or $|N_G^-(s)\cap
\interior{C}|\ge
|\interior{C}|-\frac{k+|S|}{2}$, because otherwise $(G,k)$ is a {\sc
Yes}-instance.

Let $S_1^+=\{s\in S: 0 < |N_G^+(s)\cap\interior{C}| \leq
\frac{k+|S|}{2}\}$, $S_2^+=(N_G^+(\interior{C})\cap S)\setminus S_1^+$ and
$X^+=\{v\in\interior{C}\setminus N_G^+(S_1^+):v\in N_G^+(s),\forall s\in
S_2^+\}$. 
Observe that for all $s\in S_2^+$, $|N_G^+(s)\cap\interior{C}|\geq
|\interior{C}|-\frac{k+|S|}{2}$, which means that $|X^+|\geq
|\interior{C}\setminus N_G^+(S_1^+)|-|S_2^+|\frac{k+|S|}{2}$. In addition,
$|N_G^+(S_1^+)\cap\interior{C}|\leq |S_1^+|\frac{k+|S|}{2}$, hence
$|\interior{C}\setminus N_G^+(S_1^+)|\geq
|\interior{C}|-|S_1^+|\frac{k+|S|}{2}$. Therefore, $|X^+|\geq
|\interior{C}|-(|S_1^+|+|S_2^+|)\frac{k+|S|}{2}=|\interior{C}|-|N_G^+(\interior{C})\cap
S|\frac{k+|S|}{2}\geq |\interior{C}|-|N_G(\interior{C})\cap
S|\frac{k+|S|}{2}$.

With similar definitions and the same argument we obtain $|X^-|\geq
|\interior{C}|-|N_G(\interior{C})\cap S|\frac{k+|S|}{2}$. Now let $X=X^+\cap X^-$ 
and observe
that and $|X|\geq
|\interior{C}|-|N_G(\interior{C})\cap S|(k+|S|)$.

However, by Rule \ref{kerRuleRemove2},
$|X|\leq\frac{|V(C)|+|N_G(\interior{C})\cap S|}{2}$.
So, $|\interior{C}|\le |N_G(\interior{C})\cap S|
(|S|+k+\frac{1}{2})+\frac{|V(C)|}{2}$, and so $|V(C)|\le
2|\exterior{C}|+|N_G(\interior{C})\cap S| (2|S|+2k+1)$ as claimed.
\end{proof}

\begin{theorem}\label{thm:kernel}
{\smc} has a kernel with $O(k^3)$ vertices.
\end{theorem}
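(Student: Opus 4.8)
The plan is to assemble the pieces already established into a clean bound on $|V(G)|$ for a reduced {\sc No}-instance. Recall that $V(G) = S \cup V(G-S)$ with $|S| \le 3k$, so it suffices to bound $|V(G-S)|$. The vertices of $G-S$ split naturally into three types: path vertices, exterior vertices of non-path blocks, and interior vertices of non-path blocks. I would bound each of these separately and then add up. For the path vertices, Corollary \ref{cor:pathblock} already gives a bound of $24k^2(8k-3) = O(k^3)$. For the exterior vertices of non-path blocks, Corollary \ref{cor:boundexterior} gives $12k(8k-3) = O(k^2)$.

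The remaining and most substantial contribution is the interior vertices of non-path blocks, and this is where Lemma \ref{lem:blocksize} does the work. For each non-path block $C$ in a {\sc No}-instance, that lemma yields $|V(C)| \le 2|\exterior{C}| + |N_G(\interior{C}) \cap S|(8k+1)$. Summing over all non-path blocks $C \in {\cal B}$, the term $\sum_C 2|\exterior{C}|$ is controlled by Corollary \ref{cor:boundexterior} (it is at most $2 \cdot 12k(8k-3)$, since each exterior vertex lies in boundedly many blocks), and the term $(8k+1)\sum_C |N_G(\interior{C}) \cap S|$ is controlled by the second inequality of Corollary \ref{cor:boundLeafs}, namely $\sum_{C \in {\cal B}} |N_G(\interior{C}) \cap S| \le 3k(8k-3)$. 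Multiplying, this gives $(8k+1) \cdot 3k(8k-3) = O(k^3)$ for the total number of vertices lying in non-path blocks (interior and exterior together). Adding the $O(k^3)$ path vertices and the $O(k)$ vertices of $S$ yields $|V(G)| = O(k^3)$.

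Concretely, the order of steps is: first invoke the reducedness assumption so that $|S| \le 3k$ and $G-S$ is a forest of cliques with no positive edge; second, run all the reduction rules and the earlier structural lemmas so that we are in the situation where Corollaries \ref{cor:boundLeafs}, \ref{cor:pathblock}, \ref{cor:boundexterior} and Lemma \ref{lem:blocksize} all apply (if at any point one of the {\sc Yes}-instance conditions triggers, we simply output a trivial {\sc Yes}-instance); third, bound $\sum_{C \in {\cal B}} |V(C)|$ using Lemma \ref{lem:blocksize} together with the two corollaries as above; fourth, add the path-vertex count from Corollary \ref{cor:pathblock}; fifth, add $|S|$. I would also note that the whole procedure runs in polynomial time since each rule is polynomial-time checkable and applicable, and each application strictly decreases a natural measure (number of vertices or edges), so only polynomially many applications occur.

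The main obstacle is purely bookkeeping rather than conceptual: one must be careful that the three vertex classes genuinely partition $V(G-S)$ with no double-counting and no omission. In particular, a path vertex lies only in path blocks by definition, so it is excluded from the non-path-block count, and one must verify that every vertex of a non-path block is counted either as interior or exterior of that block, with shared cut-vertices (which belong to several blocks) not inflating the estimate beyond the stated constants. The constants $6k(8k-3)$, $12k(8k-3)$ and $24k^2(8k-3)$ from the preceding corollaries should be combined with the $(8k+1)$ factor from Lemma \ref{lem:blocksize} to produce an explicit cubic bound, and I would present the final count as a single sum dominated by the $O(k^3)$ interior-vertex term, concluding $|V(G)| = O(k^3)$.
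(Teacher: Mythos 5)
Your proposal is correct and follows essentially the same route as the paper's proof: apply the reduction rules, rule out the {\sc Yes}-instance conditions via Corollary \ref{cor:boundLeafs}, Lemma \ref{lem:pathlength} and Lemma \ref{lem:blocksize}, and then sum $|S|$, the path vertices (Corollary \ref{cor:pathblock}), and $\sum_{C\in{\cal B}}|V(C)|$ bounded through Lemma \ref{lem:blocksize} together with Corollaries \ref{cor:boundLeafs} and \ref{cor:boundexterior}. The bookkeeping concerns you raise (exterior vertices shared between blocks) are handled in the paper at the same level of detail, so nothing further is needed.
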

\begin{proof}
Let $(G=(V,E),k)$ be an instance of \smc. As in Theorem \ref{thm:fpt}, apply Rules \ref{rule:first}-\ref{rule:last} exhaustively: either the instance is a {\sc Yes}-instance, or there exists $S\subseteq V$ such that $|S|\leq 3k$ and $G-S$ is a forest of cliques which does not contain a positive edge.

Now, apply Rules \ref{kerRuleFirst}--\ref{kerRuleLast} exhaustively to $(G,k)$ to obtain a new instance $(G',k')$. If $k'\leq 0$, then $(G,k)$ is a {\sc Yes}-instance since Rules \ref{kerRuleFirst}--\ref{kerRuleLast} are valid. Now let $G=G', k=k'$.
Check whether $(G,k)$ is a {\sc Yes}-instance due to Corollary \ref{cor:boundLeafs}, Lemma \ref{lem:pathlength} or Lemma \ref{lem:blocksize}. If this is not the case, by Corollary \ref{cor:pathblock}, $G-S$ contains at most $6k(8k-3)$ non-path blocks and $24k^2(8k-3)$ path vertices. Hence, by Lemma \ref{lem:blocksize}, $|V(G)|$ is at most
$$
|S|+ 24k^2(8k-3)+ \sum_{C\in {\cal B}} |V(C)|
\le O(k^3)+ 2\sum_{C\in {\cal B}}|\exterior{C}|+(8k+1)\sum_{C\in {\cal B}}|N_G(\interior{C})\cap S|
$$

Now, applying Corollary \ref{cor:boundLeafs} and Corollary \ref{cor:boundexterior}, we obtain:
$$
|V(G)| \le O(k^3) +  48k(8k-3)+3k(8k-3)(8k+1)=O(k^3).
$$

\end{proof}

It is not hard to verify that no reduction rule of this paper increases the number of positive edges. Thus, considering an input $G$ of {\sc Max Cut ATLB} as an input of  {\sc Signed Max Cut ATLB} by assigning minus to each edge of $G$, we have the following: 

\begin{corollary}\label{cor:MC}
{\sc Max Cut ATLB} has a kernel with $O(k^3)$ vertices.
\end{corollary}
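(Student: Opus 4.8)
The plan is to derive Corollary~\ref{cor:MC} directly from Theorem~\ref{thm:kernel} by viewing an instance of \mc\ as a special instance of \smc. First I would recall the stated reduction: given a graph $G$ on which we want to solve \mc, assign the sign $-$ to every edge of $G$. Since \mc\ is exactly \smc\ restricted to signed graphs in which all edges are negative, the resulting signed graph $(G,k)$ is an instance of \smc\ whose {\sc Yes}/{\sc No} status coincides with that of the original \mc\ instance, because the Poljak-Turz\'ik bound $\pt{G}$ does not depend on the signs of the edges and a balanced subgraph of an all-negative signed graph is precisely a cut of the underlying unsigned graph.

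The key observation making the argument work is the remark preceding the corollary: none of the reduction rules in this paper (neither the one-way Rules~\ref{rule:first}--\ref{rule:last} nor the two-way Rules~\ref{kerRuleFirst}--\ref{kerRuleLast}) ever increases the number of positive edges. I would verify this by a quick inspection of each rule---most rules only delete vertices or whole components, while the only rule that adds an edge, Rule~\ref{kerRsize3see2}, inserts new positive or negative edges according to the signs it inherits from $N_{G-u}^{+}$ and $N_{G-u}^{-}$, so if the input had no positive edges at all then no positive edge is ever created. Consequently, starting from an all-negative instance, the kernel $(G',k')$ produced by Theorem~\ref{thm:kernel} is again an all-negative signed graph, i.e. a genuine \mc\ instance on the underlying unsigned graph.

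Putting these two points together gives the corollary: applying the $O(k^3)$-vertex kernelization of Theorem~\ref{thm:kernel} to the all-negative encoding of an \mc\ instance yields, in polynomial time, an equivalent \smc\ instance with $O(k^3)$ vertices that happens to have only negative edges, and hence is an equivalent \mc\ instance of the same size. Since equivalence of the \smc\ instances is equivalence of the corresponding \mc\ instances (by the sign-independence of $\pt{\cdot}$ and the identification of balanced subgraphs with cuts), this is exactly an $O(k^3)$-vertex kernel for \mc.

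The only real point requiring care---and the step I would treat as the main obstacle---is the claim that the rules never introduce a positive edge, since this is what guarantees the kernel stays within the \mc\ fragment rather than merely being a general \smc\ instance. I would therefore make the rule-by-rule check explicit for Rule~\ref{kerRsize3see2}, emphasizing that its added edges are positive exactly when the corresponding neighbours were joined by positive edges, so the all-negative property is preserved; every other rule is purely subtractive and trivially preserves it.
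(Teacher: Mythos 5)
Your proposal is correct and is essentially the paper's own argument: the paper proves the corollary via the single remark that no reduction rule increases the number of positive edges, so an all-negative encoding of a {\sc Max Cut ATLB} instance stays all-negative throughout the kernelization of Theorem~\ref{thm:kernel}. Your rule-by-rule justification, including the check of Rule~\ref{kerRsize3see2} as the only edge-creating rule, simply makes explicit what the paper leaves as ``not hard to verify.''
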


\section{Extensions and Open Questions}

In the previous sections, the input of {\sc Signed Max Cut ATLB} is a
signed graph without parallel edges. However, in some applications
(cf. \cite{GGMZ,GZ}), signed graphs may have parallel edges of opposite
signs.
We may easily extend inputs of {\sc Signed Max Cut ATLB} to such
graphs. Indeed, if $G$ is such a graph we may remove all pairs of
parallel edges from $G$ and obtain an equivalent instance of {\sc
Signed Max Cut ATLB}.

In fact,  the Poljak-Turz\'{i}k bound can be extended to edge-weighted
graphs \cite{PoljakTurzik86}. Let $G$ be a connected  signed graph in
which each edge $e$ is assigned a positive weight $w(e)$. The weight
$w(Q)$ of an edge-weighted graph $Q$ is the sum of weights of its
edges. Then $G$ contains a balanced subgraph with weight at least
$w(G)/2+w(T)/4,$ where $T$ is a spanning tree of $G$ of minimum
weight \cite{PoljakTurzik86}.
It would be interesting to establish parameterized complexities of
{\sc Max Cut ATLB} and {\sc Signed Max Cut ATLB} extended to
edge-weighted graphs using the Poljak-Turz\'{i}k bound above.

\vspace{2mm}



\begin{thebibliography}{10}

\bibitem{CKSXZ} C. Chiang, A. B. Kahng, S. Sinha, X. Xu, and A. Z. Zelikovsky. Fast and efficient
bright-field AAPSM conflict detection and correction. IEEE Transactions on Computer-Aided Design of Integrated Circuits and Systems, 26(1):11--126, 2007.

\bibitem{CroFelGutJonRosThoYeo} R.~Crowston, M.~Fellows, G.~Gutin, M.~Jones, F.~Rosamond, S.~Thomass{\'e} and A.~Yeo.
Simultaneously Satisfying Linear Equations Over $\mathbb{F}_2$: MaxLin2 and Max-$r$-Lin2 Parameterized Above Average.
In {\em FSTTCS 2011}, LIPICS 13, 229--240, 2011.

\bibitem{CroGutJon} R. Crowston, G. Gutin and M. Jones, Directed Acyclic Subgraph Problem Parameterized above the Poljak-Turz\'ik Bound. In {\em FSTTCS 2012}, LIPICS 18, 400--411, 2012.


\bibitem{CroJonesMnich} R. Crowston, M. Jones, and M. Mnich. Max-Cut Parameterized Above the Edwards-Erd{\H o}s Bound. arXiv:1112.3506v2. 
A preliminary version published in ICALP 2012, Part I, Lect. Notes Comput. Sci. 7391 (2012) 242--253.

\bibitem{DasEncSonZha} B. DasGupta, G. A. Enciso, E. D. Sontag, and Y. Zhang. Algorithmic and complexity
results for decompositions of biological networks into monotone subsystems. Biosystems 90(1):161--178, 2007.

\bibitem{DowneyFellows99} R.~G. Downey and M.~R. Fellows, Parameterized Complexity. Springer-Verlag, 1999.

\bibitem{FlumGrohe06} J.~Flum and M.~Grohe, Parameterized Complexity Theory, Springer-Verlag, 2006.

\bibitem{GGMZ} N. G\"{u}lp\i nar, G. Gutin, G. Mitra and A. Zverovitch, Extracting Pure Network Submatrices in Linear Programs Using Signed Graphs. Discrete Applied Mathematics 137 (2004)
359--372.

\bibitem{GZ} G. Gutin and A. Zverovitch, Extracting pure network submatrices in linear programs using signed graphs,
Part 2. Communications of DQM  6 (2003) 58--65.

\bibitem{Harary} F. Harary. On the notion of balance of a signed graph. Michigan Math. J., 2(2):143--146, 1953.

\bibitem{HufBetNie} F. H{\" u}ffner, N. Betzler, and R. Niedermeier. Optimal edge deletions for signed graph balancing. In 6th international Conf. on Experimental Algorithms (WEA'07),  297--310, 2007.

\bibitem{MPSS} M. Mnich, G. Philip, S. Saurabh, and O. Such\'{y}, Beyond
Max-Cut: $\lambda$-Extendible Properties Parameterized Above the
Poljak-Turz\'{i}k Bound. In {\em FSTTCS 2012}, LIPICS 18, 412--423, 2012.

\bibitem{Niedermeier06}
R.~Niedermeier. Invitation to Fixed-Parameter Algorithms,
Oxford UP, 2006.

\bibitem{PoljakTurzik86}
S. Poljak and D. Turz\'{i}k, A polynomial time heuristic for certain subgraph optimization problems with guaranteed worst case bound. Discrete Mathematics, 58 (1) (1986) 99--104.

\bibitem{RamSau} V. Raman and S. Saurabh, Parameterized algorithms for feedback set problems
and their duals in tournaments. Theor. Comput. Sci., 351 (3) (2006) 446--458.

\bibitem{Zas} T. Zaslavsky. Bibliography of signed and gain graphs. Electronic Journal of Combinatorics, DS8, 1998.



\end{thebibliography}
\end{document}